 \title{On the Convergence of the Convectively Filtered Burgers Equation to the Entropy Solution of the Inviscid Burgers Equation}
 \author{
 Greg Norgard %
    \thanks{Graduate Student,Department of Applied Mathematics.}
 and Kamran Mohseni%
   \thanks{Associate Professor of Aerospace Engineering Sciences; Affiliated faculty in the Applied Mathematics Department.}\\
  {\normalsize\itshape
  University of Colorado, Boulder, Colorado, 80309, US}
 }
 \def\p{\partial}
 \def\ubar{\bar{u}}
\def\spacce#1{\hskip #1pt}
\def\drawline#1#2{\raise 2.5pt\vbox{\hrule width #1pt height #2pt}}
\def\solid{\drawline{24}{.5}\nobreak}
\def\bdash{\hbox{\drawline{4}{.5}\spacce{2}}}
\def\dashed{\bdash\bdash\bdash\bdash\hskip-2pt\nobreak}
\def\bdot{\hbox{\drawline{1}{.5}\spacce{2}}}
\def\dashdot{\bdash\bdot\bdash\bdot\bdash\bdot}
\def\dashdotdot{\bdash\bdot\bdot\bdash\bdot\bdot\bdash\bdot\bdot}
\newtheorem{theorem}{Theorem}[section]
\newtheorem{lemma}[theorem]{Lemma}
\newtheorem{conjecture}[theorem]{Conjecture}
\newtheorem{definition}[theorem]{Definition}
\newenvironment{proof}[1][Proof]{\begin{trivlist}
\item[\hskip \labelsep {\bfseries #1}]}{\end{trivlist}}
\begin{document}

\maketitle

\begin{abstract}
This document provides a proof that the solutions to the
convectively filtered Burgers equation, will converge to the entropy
solution of the inviscid Burgers equation when certain restrictions
are put on the initial conditions.  It does so by first establishing
convergence to a weak solution of the inviscid Burgers equation and
then showing that the weak solution is the entropy solution.  Then
the results are extended to encompass more general initial
conditions.
\end{abstract}

\section{Introduction}
Using a filtered velocity in fluid dynamics is not a new concept.
Filtered velocities have been used in turbulence modeling in Large
Eddy Simulation (LES) \cite{Germano:91a,Lesieur:96a,Hughes:01b},
Lagrangian Averaged Navier-Stokes (LANS-$\alpha$)
\cite{FoiasC:02a,Marsden:01h,Mohseni:03a,Chen:99b}, and Leray
turbulence modeling \cite{Holm:04a, IlyinAA:06a, Hanjalic:06a}.
Specifically in the LANS-$\alpha$ and Leray approaches, a filtered
velocity is used in the nonlinear term of the Navier-Stokes
equations.  A form of the compressible Euler equations with a
filtered velocity has also been developed using the Lagrangian
averaging \cite{Mohseni:03c}.  In our earlier paper
\cite{Norgard:08b}, it was discussed that it should be possible to
model both turbulence and shock formation using such a filtered
velocity.  This was motivated by realizing that turbulence and
shocks are both consequences of the nonlinear term and its resulting
cascade of energy into smaller scales.  Thus it should be possible
to capture both effects with proper small scale modeling.  It has
been seen that some turbulent behavior has been successfully modeled
using a filtered velocity in the LANS-$\alpha$ and Leray approaches.
This paper in conjunction with our previous paper
\cite{Norgard:08b},  aims at showing that such a technique can
successfully model shock formation.

The investigation begins with the inviscid Burgers equation,
\begin{equation}
 u_t+uu_x=0.
\end{equation}
Burgers equation was chosen because it shares the same nonlinear
term as the Euler and Navier-Stokes equations.  Additionally it is a
conservation law, like the Euler equations.  It is known to form
shocks, and has been well studied.

It is well established that the inviscid Burgers equation forms
discontinuities in finite time, determined by initial conditions
\cite{WhithamGB:74a, LaxPD:73a}.  To deal with these discontinuities
weak solutions are introduced.  However, when weak solutions are
introduced, solutions are no longer necessarily unique
\cite{LaxPD:73a, OleinikOA:57a}.  In order to choose the physically
relevant solution, an entropy condition is applied, which one and
only one weak solution satisfies. This physically relevant solution
is referred to as the entropy solution.  Lax, Oleinik, and Kruzkov
have examined the entropy condition for conservation laws and
expressed it using different techniques \cite{LaxPD:73a,
OleinikOA:57a, KruzkovSN:70a}.  Each of their entropy conditions can
be used in different classes of conservation laws, but can all be
applied to the inviscid Burgers equation with equivalent results
\cite{LellisCD}.  This paper uses the Lax entropy condition, which
is explained in section \ref{previouslyestablishedfacts}.

Classically the inviscid Burgers equation is regularized by adding
viscosity, resulting in the equation
\begin{equation}
 u_t+uu_x=\nu u_{xx}.
\end{equation}
This regularization has been proven to converge to the entropy
solution of the inviscid Burgers equation as $\nu \to 0$
\cite{LaxPD:73a, OleinikOA:57a, KruzkovSN:70a}.

This  paper considers the equations
\begin{subequations}\label{IVP2}
\begin{eqnarray}
\label{IVP2a}
u_t+\ubar u_x=0\\
\label{IVP2b}
\ubar=g^\alpha \ast u\\
\label{IVP2c} u(x,0)=u_0(x)
\end{eqnarray}
\end{subequations}
where
\begin{equation}
 g^\alpha=\frac{1}{\alpha} g\left( \frac{x}{\alpha} \right)
\end{equation}
where $g$ is a chosen filter.  These equations replace the
convective velocity of the inviscid Burgers equation with a filtered
velocity.  Thus, equations (\ref{IVP2a}) and (\ref{IVP2b}) are
referred to as the convectively filtered Burgers equation (CFB).
While it has been proven that the solutions to the CFB equations
exist \cite{Norgard:08b}, previously it has only been proven that
the solutions for the Helmholtz filter converge to a weak solution
of the inviscid Burgers equation with attempts to show numerically
convergence to the entropy solution \cite{BhatHS:06a}.

 This paper proves that for a specific set of initial conditions that the
solutions to the CFB equations will converge to the entropy solution
of the inviscid Burgers equation.  Specifically we will look at bell
shaped, continuously differentiable initial conditions rigorously
defined in Definition \ref{conditionB}.  We then give rationale and
make a conjecture on how the CFB equations will converge to the
entropy solution for any continuous initial conditions, and how to
regain an entropy solution for discontinuous initial conditions.

 The following section reviews established facts about the inviscid Burgers
equation and some of the recent work regarding the CFB equations.
Section \ref{weaksolutionsection} proves that solutions to the CFB
equations converge to a weak solution of the inviscid Burgers
equation, and section \ref{entropysolutionsection} proves
convergence to the entropy solution.  Section
\ref{moreentropysolutionsection} then extends the results of section
\ref{entropysolutionsection} and conjectures that it can be extended
further.  Section \ref{numericssection} runs some numerical
simulations and examines the results. All is the followed with
concluding remarks.

%\section{Introduction}
%Convectively Filtered Burgers equation (CFB) is an attempt at a new type of regularization of Burgers equation.  Classically, the inviscid Burgers equation
%\begin{equation}
% u_t+uu_x=0
%\end{equation}
%is regularized by adding viscosity resulting in viscous Burgers equation,
%\begin{equation}
% u_t+uu_x=\nu u_{xx}.
%\end{equation}
%
%What Convectively Filtered Burgers equation does is replace the convective velocity with a filtered velocity resulting in
%\begin{eqnarray}
%\label{IVP2a}
%u_t+\ubar u_x=0\\
%\label{IVP2b}
%\ubar=g^\alpha \ast u\\
%\label{IVP2c}
%u(x,0)=u_0(x)
%\end{eqnarray}
%where
%\begin{equation}
% g^\alpha=\frac{1}{\alpha} g\left( \frac{x}{\alpha} \right)
%\end{equation}
%where $g$ is a chosen filter.
%
%It has been well established that the solutions of viscous Burgers equation converge to the entropy solutions of the inviscid Burgers equation.  While it has been proven that the solutions to CFB exist \cite{Mohseni:07s}, previously it has only been proven that the solutions for the Helmholtz filter converge to a weak solution of the inviscid Burgers equation with attempts to show numerically convergence to the entropy solution \cite{BhatHS:06a}.  This paper proves that for a specific set of initial conditions that the solutions to CFB will converge to the entropy solution of the inviscid Burgers equation.
%

\section{Background Information of Burgers Equation and the CFB equations}\label{previouslyestablishedfacts}
Burgers equation has been thoroughly researched by many people over
the years.  This section provides a review of some of the previously
established properties of the inviscid Burgers equation.  Many of
these will be used later on to establish new results about the CFB
equations.  This section will also list some of the previously
established properties of the CFB equations, which are also crucial
to the analysis found in the following sections.

\subsection{Method of Characteristics}
The inviscid Burgers equation lends itself well to examination with
method of characteristics.  From Whitham \cite{WhithamGB:74a}, the
inviscid Burgers equation can be broken into two ODE's
\begin{align}
u_t(\xi)=0\\
\frac{\p}{\p t} \xi=u(\xi).
\end{align}
From this it is determined that along the characteristics
\begin{equation}
\xi=x_0+u_0(\xi)t
\end{equation}
$u(x)$ is constant. Thus characteristics travel at the speed equal
to the value of $u$ along those characteristics.  This is, true,
until characteristics cross, forming shocks. This is equivalent to
seeing that the material derivative is zero \cite{Marsden:94a}.

\subsection{Weak Solutions and Entropy Conditions}
Lax \cite{LaxPD:73a}  addresses weak solutions and entropy solutions
of conservation laws.  From his work, a lot of information can be
gained about the solutions to the inviscid Burgers equation.

The first thing we learn is that any weak solution to the inviscid
Burgers equation must satisfy the integral form of the conservation
law, or
\begin{equation}
\left. \int_{g}^{h} u \, dx  \right|_{t_1}^{t_2}=\left.
\int_{t_1}^{t_2} \frac{-u^2}{2} \right|_{g}^{h} \, dt,
\end{equation}
which must hold for any $g$ and $h$ and every time interval
$(t_1,t_2)$.  A consequence of this are the Rankine-Hugoniot jump
conditions. These dictate the speed at which any discontinuity can
propagate.  If $s$ is the position of a shock then,
\begin{equation}
 \frac{d}{dt} s(t)=\frac{1}{2}\left[  u(s^-) + u(s^+) \right].
 \end{equation}

Lax also establishes the existence and uniqueness of a weak solution
to the inviscid Burgers equation which satisfies the so called
entropy condition
\begin{equation}
 u(s^-) > u(s^+)
\end{equation}
where $s$ is the location of a discontinuity.  Thus the only
discontinuities that are allowed to exist in this ``entropy
solution'' are decreasing jumps.

Lax also states that for solutions satisfying the entropy condition,
``every point can be connected by a backward drawn characteristic to
a point on the initial line.''  Thus any value of the entropy
solution, $u(x,t)$, can be traced back to the initial conditions.
For discontinuous initial conditions, points traced back to the
point of discontinuity can take on values between the left and right
limits of the discontinuity as is shown in section
\ref{behaviorexample}.  For continuous initial conditions the
entropy solution can be written as $u(x,t)=u_0(\phi(x,t))$, where
$\phi(x,t)$ is an increasing function of $x$ for any time, and
$\phi(x,0)=x$.

Here we will define what will be referred to in this paper as a
reparameterization of a function.

\begin{definition}
If $\phi(x,t)$ is an increasing function of $x$ for any time, and
$\phi(x,0)=x$, the function $f(\phi(x,t))$ will be called a
reparameterization of the function $f$.
\end{definition}

It is clear that at any time $t$, a reparameterization of the
function $f$ cannot obtain values that are not obtained by $f$.  It
can, however, lack values that are found in $f$, as it was not
dictated that $\phi$ be onto for all time.  Looking back to the
previous paragraph we can see that for continuous initial
conditions, the entropy solution to the inviscid Burgers equation
will be a reparameterization of the initial conditions.

\subsection{Properties of the CFB equations}
From previous work by our group \cite{ Mohseni:06l, Mohseni:07e,
Mohseni:07s,Norgard:08b} the following theorem is established. It is
presented here in its one dimensional form.

\begin{theorem}
\label{existencetheorem} Let $g(x)$ $\in$ $W^{1,1}(\mathbb{R})$ and
$u_0(x)$ $\in$ $C^1(\mathbb{R})$, then there exists a unique global
solution $u(x,t)$ $\in$ $C^1(\mathbb{R},\mathbb{R})$ to the initial
value problem (\ref{IVP}).

\begin{subequations}\label{IVP}
\begin{align}
\label{IVPa}
u_t+\ubar u_x=0\\
\label{IVPb}
\ubar=g\ast u\\
\label{IVPc} u(x,0)=u_0(x)
\end{align}
\end{subequations}

\end{theorem}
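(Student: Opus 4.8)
The plan is to combine the method of characteristics with a fixed-point argument, exploiting the smoothing provided by the filter to control the advecting velocity. The starting observation is that \eqnref{IVPa} is a linear transport equation once $\ubar$ is regarded as given: $u$ is constant along the characteristics $\dot X(t)=\ubar(X(t),t)$. I would therefore set up a map $T$ on velocity fields. Given a candidate $v(x,t)$ in a suitable Banach space of bounded, Lipschitz-in-$x$ functions, solve the characteristic ODE $\dot X=v(X,t)$, $X(0)=x_0$ to obtain a flow $\Phi_t$, define $u[v](x,t):=u_0(\Phi_t^{-1}(x))$, and set $T(v):=g\ast u[v]$. A fixed point of $T$ is precisely a solution of \eqnref{IVP}.

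The key structural feature to exploit is the smoothing of the filter. Since $g\in W^{1,1}$, Young's inequality gives $\|g\ast u\|_\infty\le\|g\|_{1}\|u\|_\infty$ and $\|\p_x(g\ast u)\|_\infty=\|g'\ast u\|_\infty\le\|g'\|_{1}\|u\|_\infty$, so the filtered velocity $\ubar$ and its spatial derivative are controlled purely by $\|u\|_\infty$, not by any derivative of $u$. First I would use this to establish local-in-time existence and uniqueness: on a short interval $[0,T]$ the map $T$ is a contraction, because the convolution trades a derivative for an $L^1$ factor while the flow depends Lipschitz-continuously on $v$ through Gronwall estimates on the characteristic ODE. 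Uniqueness on $[0,T]$ follows from the same contraction, and a Gronwall comparison of two solutions extends uniqueness to any interval on which both exist.

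The heart of the matter is the passage from local to global existence, i.e. ruling out finite-time blow-up. Here the decisive a priori bounds are the maximum principle and the derivative estimate. Because $u$ is constant along characteristics, $\|u(\cdot,t)\|_\infty=\|u_0\|_\infty$ for all $t$, so $\|\ubar\|_\infty$ and $\|\ubar_x\|_\infty$ stay uniformly bounded by $\|g\|_1\|u_0\|_\infty$ and $\|g'\|_1\|u_0\|_\infty$. Differentiating \eqnref{IVPa} in $x$ gives $\p_t u_x+\ubar\,\p_x u_x=-\ubar_x u_x$, so along a characteristic $u_x$ satisfies $\tfrac{d}{dt}u_x=-\ubar_x u_x$ and hence
\begin{equation}
|u_x(X(t),t)|=|u_0'(x_0)|\exp\!\left(-\int_0^t \ubar_x\,ds\right)\le\|u_0'\|_\infty\,e^{\,\|g'\|_1\|u_0\|_\infty\,t}.
\end{equation}
This is the crucial difference from the unfiltered Burgers equation, where $\ubar_x=u_x$ produces a Riccati blow-up and shocks; filtering replaces the quadratic term by one linear in $u_x$ with coefficient bounded by the conserved quantity $\|u\|_\infty$, so $u_x$ grows at most exponentially and never escapes to infinity in finite time.

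With these bounds the continuation is routine: since $\|u(\cdot,t)\|_\infty$ and $\|u_x(\cdot,t)\|_\infty$ remain finite on every finite interval, the local solution extends to all $t$, and the same estimates show it is $C^1$ (the flow $\Phi_t$ is a $C^1$ diffeomorphism with non-vanishing Jacobian, so $u(x,t)=u_0(\Phi_t^{-1}(x))$ is $C^1$). I expect the main obstacle to be the global derivative estimate above, since establishing cleanly that the filter confines $\ubar_x$ to a bound depending only on $\|u_0\|_\infty$ is exactly what defeats shock formation and upgrades the local solution to a global one.
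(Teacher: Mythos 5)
Your proposal follows essentially the same route as the paper's (sketched) proof: conservation of $\|u\|_\infty$ along characteristics, Young's inequality to bound $\|\ubar_x\|_\infty$ uniformly in time, and then the observation that this keeps the characteristic map non-degenerate --- your estimate $\frac{d}{dt}u_x=-\ubar_x u_x$ along characteristics is just the dual form of the paper's Jacobian equation $\partial_t J=\ubar_x J$, and both yield the same conclusion that characteristics never cross and $u(x,t)=u_0(\phi(x,t))$ is a global $C^1$ reparameterization of the data. The contraction-mapping setup for local existence is a standard detail the paper delegates to its references; otherwise the arguments coincide.
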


A sketch of the theorem is as follows.  Examine the equations using
method of characteristics.  Due to the nature of the equations the
infinity norm of $u$ will be bounded for all time.  By Young's
inequality $||\ubar_x||_\infty$ can thus be bounded for all time.
The characteristics of the equations will not cross if their
Jacobian remains nonzero.  The rate of change of the Jacobian can be
directly related to $\ubar_x$ by
\begin{equation}
\frac{\partial}{\partial t} J =\ubar_x \, J.
\end{equation}
  Since $||\ubar_x||_\infty$
remains bounded, the Jacobian will remain nonzero, the
characteristics will not cross, and a unique solution will exist for
any finite time.

In the course of proving the theorem, it was established that the
solution take the form $u(x,t)=u_0(\phi(x,t))$ where $\phi(x,t)$ is
a continuous, invertible, and increasing function of $x$ for any
time, and $\phi(x,0)=x$.  Thus the solution is a reparameterization
of its initial conditions.

\section{Weak Solution}\label{weaksolutionsection}
Regularizations of conservation laws do not necessarily have to
converge to weak solutions to those conservation law.  Take for
example the KdV equations,
\begin{equation}
\label{kdv} u_t+u\,u_x=-\epsilon u_{xxx}.
\end{equation}
This regularizes the inviscid Burgers equation in the sense that
solutions are now continuous, however, many oscillations form as
$\epsilon \to 0$, requiring a weak limit for convergence
\cite{Lax:83a,Kawahara:70b}. This limit is not a weak solution of
the inviscid Burgers equation \cite{GurevichAV:74a}, and thus
definitely not the entropy solution.

Thus the first step to proving convergence to the entropy solution
is to prove convergence to a weak solution.  The following
subsections prove this by showing that a subsequence of the
solutions to the CFB equations must converge to a function in
$L^1_\text{loc}$.  It is then shown that this function is, in fact,
a weak solution to the inviscid Burgers equation.

\subsection{Convergence of Solutions}
In this section we show that the solutions of the CFB equations
($u^\alpha$) converge to a function $u$.  This subsection mirrors
work done by Bhat and Fetecau \cite{BhatHS:06a}.  We begin by
claiming the following properties of the solutions $u^\alpha$.

\begin{lemma}  The solutions to the initial value problem (\ref{IVP2}) have the following properties.
\begin{eqnarray}
\label{bounded}
||u^\alpha(\cdot,t)||_{L^\infty}=||u^\alpha(\cdot,0)||_{L^\infty} = ||u_0||_{L^\infty}=A_1,\\
\label{totalvariation}
TV(u^\alpha(\cdot,t))=TV(u^\alpha(\cdot,0)) = TV(u_0(\cdot))= A_2,\\
\label{equicontinuous} \int_\mathbb{R} |u^\alpha(x,t)-u^\alpha(x,s)|
dx \leq A_3|t-s|,
\end{eqnarray}
where $A_1$, $A_2$, and $A_3$ are independent of $\alpha$ and
$TV(f(\cdot))$ can be defined for a smooth function $f$ as
\begin{equation}
TV(f(\cdot))=\int_\mathbb{R} |f'(x)| dx.
\end{equation}
\end{lemma}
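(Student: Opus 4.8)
The plan is to prove each of the three properties separately, exploiting the crucial structural fact established in Theorem \ref{existencetheorem}: every solution is a reparameterization of its initial data, $u^\alpha(x,t)=u_0(\phi^\alpha(x,t))$ with $\phi^\alpha$ continuous, invertible, and increasing in $x$ for each fixed $t$. This monotone change of variables is what makes all three bounds $\alpha$-independent, since it reduces everything to quantities determined by $u_0$ alone.

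For the $L^\infty$ bound \eqnref{bounded}, I would simply observe that since $\phi^\alpha(\cdot,t)$ is an increasing bijection of $\mathbb{R}$, the set of values $\{u_0(\phi^\alpha(x,t)):x\in\mathbb{R}\}$ is exactly the range of $u_0$; hence $\|u^\alpha(\cdot,t)\|_{L^\infty}=\|u_0\|_{L^\infty}=A_1$, trivially independent of $\alpha$ and $t$. For the total variation bound \eqnref{totalvariation}, the key point is that composing with an increasing function does not create new oscillations. Concretely, for a monotone reparameterization the substitution $y=\phi^\alpha(x,t)$ gives
\begin{equation}
TV(u^\alpha(\cdot,t))=\int_\mathbb{R}|u_0'(\phi^\alpha(x,t))|\,\p_x\phi^\alpha(x,t)\,dx=\int_\mathbb{R}|u_0'(y)|\,dy=TV(u_0),
\end{equation}
where the positivity of $\p_x\phi^\alpha$ (from monotonicity) lets the Jacobian factor come out of the absolute value and cancel under the change of variables. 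This yields $A_2=TV(u_0)$.

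The third estimate \eqnref{equicontinuous}, the $L^1$ Lipschitz-in-time bound, is the one I expect to be the main obstacle, since it is the only property that genuinely uses the evolution equation rather than just the static reparameterization structure. The natural approach is to write, using \eqnref{IVP2a},
\begin{equation}
\int_\mathbb{R}|u^\alpha(x,t)-u^\alpha(x,s)|\,dx=\int_\mathbb{R}\left|\int_s^t u_\tau^\alpha(x,\tau)\,d\tau\right|dx\leq\int_s^t\int_\mathbb{R}|\ubar^\alpha(x,\tau)\,u_x^\alpha(x,\tau)|\,dx\,d\tau,
\end{equation}
and then to bound the inner spatial integral uniformly in $\tau$ and $\alpha$. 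Here I would pull out $\|\ubar^\alpha(\cdot,\tau)\|_{L^\infty}$, which by Young's inequality satisfies $\|\ubar^\alpha\|_{L^\infty}=\|g^\alpha\ast u^\alpha\|_{L^\infty}\leq\|g^\alpha\|_{L^1}\|u^\alpha\|_{L^\infty}\leq\|g\|_{L^1}A_1$ (the $L^1$ norm of the filter being scale-invariant under the $\alpha$-rescaling), leaving $\int_\mathbb{R}|u_x^\alpha|\,dx=TV(u^\alpha(\cdot,\tau))=A_2$ by \eqnref{totalvariation}. Combining these gives the bound with $A_3=\|g\|_{L^1}A_1A_2$, manifestly independent of $\alpha$.

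The delicate point in that last step is justifying the interchange of the time integral and the absolute value (Minkowski's integral inequality) and ensuring $u_x^\alpha$ is integrable, both of which follow from the $C^1$ regularity guaranteed by Theorem \ref{existencetheorem} together with the finite total variation from \eqnref{totalvariation}. Once these three estimates are in hand, they are precisely the hypotheses needed to invoke a Helly/Kolmogorov--Riesz compactness argument in the following subsection, extracting an $L^1_{\text{loc}}$-convergent subsequence. I would therefore present the three bounds in the order above — $L^\infty$ first (easiest, and needed for the others), total variation second (needed for the third), and the temporal estimate last — so that each builds on its predecessor.
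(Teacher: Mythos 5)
Your proposal is correct, and two of the three estimates ((\ref{bounded}) and (\ref{equicontinuous})) follow essentially the paper's own route: the $L^\infty$ conservation is inherited from the existence theory, and the temporal $L^1$ estimate is the same chain $\int|u^\alpha(x,t)-u^\alpha(x,s)|\,dx\le\int_s^t\int|\ubar^\alpha u^\alpha_x|\,dx\,dt\le\|\ubar^\alpha\|_{L^\infty}\int_s^t\|u^\alpha_x\|_{L^1}\,dt$ (your insertion of the factor $\|g\|_{L^1}$ from Young's inequality is in fact slightly more careful than the paper, which tacitly uses $\|\ubar^\alpha\|_\infty\le A_1$). Where you genuinely diverge is the total variation identity (\ref{totalvariation}). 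The paper proves it dynamically: differentiate (\ref{IVP2a}), multiply by $\mathrm{sgn}(u_x)$, integrate over $\mathbb{R}$ to get $\frac{\p}{\p t}\int|u_x|\,dx+\int \mathrm{sgn}(u_x)(\ubar u_x)_x\,dx=0$, and then argue the second integral vanishes by splitting $\mathbb{R}$ into intervals of constant $\mathrm{sgn}(u_x)$ and noting $u_x=0$ at the switching points. You instead use the static structural fact that $u^\alpha(\cdot,t)=u_0(\phi^\alpha(\cdot,t))$ with $\phi^\alpha$ increasing and invertible, so that the monotone change of variables $y=\phi^\alpha(x,t)$ gives $TV(u^\alpha(\cdot,t))=TV(u_0)$ exactly. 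Both arguments are valid; yours is shorter and avoids the slightly delicate sign-splitting step (which in the paper implicitly assumes the zero set of $u_x$ is well behaved), but it leans on the reparameterization form and on $\p_x\phi^\alpha>0$ from the existence proof, whereas the paper's computation is self-contained from the PDE and would survive even if one only knew $u^\alpha$ is a $C^1$ solution without the characteristics picture. (If you want to avoid the Jacobian entirely, note that $TV$ defined as a supremum over partitions is invariant under composition with any increasing bijection, with no differentiability of $\phi^\alpha$ required.) No gaps; the proposal stands.
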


\begin{proof}
Property \ref{bounded} is verified by the existence proof in earlier
papers \cite{ Mohseni:07e, Mohseni:07s,Norgard:08b}, that
$||u^\alpha(\cdot,t)||_{L^\infty}=||u^\alpha(\cdot,0)||_{L^\infty}$.

To verify property \ref{totalvariation}, take the derivative of
\ref{IVP2a}, multiply by $sign(u_x)$ and integrate over the real
line to obtain
\begin{equation}
\frac{\p}{\p t} \int |u_x|\, dx + \int sgn(u_x) (\ubar u_x)_x\,dx=0.
\end{equation}
Break the second term into intervals where $sign(u_x)$ remains
constant. $u_x$ and $\ubar$ are continuous due to previous existence
theorems, so at the locations that $sign(u_x)$ switches signs, the
value of $u_x$ will be $0$. Thus the second term is zero and we
obtain the result
\begin{equation}
||u_x(\cdot,t)||_{L^1}=||u_x(\cdot,0)||_{L^1},
\end{equation}
and thus Property \ref{totalvariation} is established.

Property \ref{equicontinuous} can be proved by the following
estimate:

\begin{align*}
\int_\mathbb{R} |u^\alpha(x,t)-u^\alpha(x,s)| dx &\leq \int_\mathbb{R} \int_s^t |u_t^\alpha| \, dt \, dx\\
&=\int_\mathbb{R} \int_s^t |\ubar^\alpha u^\alpha_x| \, dt \, dx\\
&= \int_s^t \int_\mathbb{R} |\ubar^\alpha u^\alpha_x| \, dx \, dt\\
&\leq ||\ubar^\alpha||_{L^\infty} \int_s^t ||u^\alpha_x||_{L^1} dt\\
&\leq A_1 A_2 |t-s|
\end{align*}

\end{proof}

From Bressan \cite{BressanA:00a} and Serre \cite{SerreD:99a} we know
that properties (\ref{bounded}), (\ref{totalvariation}), and
(\ref{equicontinuous}) are enough to guarantee that a subsequence of
$u^\alpha$ converges to a function $u$ in $L^1_\text{loc}$.
Furthermore, $u$ shares the same infinity norm bound as that
established in (\ref{bounded}), the same total variation bound as
that in (\ref{totalvariation}).

\subsection{Convergence to a Weak Solution}\label{intro}

To begin we look at a specific subset of filters.  The filters we
examine are the functions whose Fourier transforms can be written as
$$\hat{g}(k)=\frac{1}{1+\sum_{j=1}^n C_j k^{2j}} \qquad \mbox{with } n<\infty, C_j \ge 0 \,\, C_n \ne 0.$$
Noting that $\hat{g} \hat{u}=\hat{\ubar}$  we can see that
$$\hat{u}=\left(1+\sum_{j=1}^n C_j  k^{2j}\right) \hat{\ubar}$$
and
$$u=\left(1+\sum_{j=0}^n (-1)^i C_j \frac{\p^{2j}}{\p x^{2j}}\right) \ubar$$
We will refer to a filter of this form as satisfying condition A.
This class of filters includes the Helmholtz filter, which has been
of previous interest in turbulence modeling.

Clearly $g(x)$ and its derivatives up to $g^{(2n-2)}(x)$ are well
defined and bounded as $\frac{(ik)^{2n-2}}{1+\sum_{j=1}^n C_j
k^{2j}}$ is absolutely integrable.

If $u$ and its derivative $u_x$ are absolutely integrable, then for
a $g$ satisfying condition A, the convolution
$$\frac{\p ^j}{\p x ^j} \ubar= \frac{\p ^{j-1}}{\p x ^{j-1}} g^\alpha \ast u_x$$
is well defined.  Furthermore, by Young's inequality
$$\left| \left | \frac{\p ^j}{\p x ^j} \ubar \right| \right|_\infty \leq \left| \left| \frac{\p ^{j-1}}{\p x ^{j-1}} g^\alpha  \right| \right|_\infty \left| \left| u_x \right| \right| _1=\frac{1}{\alpha^{j}} \left| \left| g^{(j-1)} \right| \right| _\infty \left| \left| u_x \right| \right|_1$$.

Thus there exists a constant $A_4$ such that
$$\left| \left | \frac{\p ^j}{\p x ^j} \ubar \right| \right|_\infty < \frac{1}{\alpha^j} A_4 \qquad \mbox{for j} \leq 2n-1 $$
This criteria is used in the following lemma.

\begin{lemma}
\label{manyderivativelimit} Let $u^\alpha$ be a sequence of
functions that satisfy the following conditions.

\begin{subequations}
\begin{eqnarray}
u^\alpha,\ubar^\alpha <A_1& \\
\int |u^\alpha_x| dx, \int |\ubar^\alpha_x| dx < A_2&\\
\left| \left | \frac{\p ^j}{\p x ^j} \ubar^\alpha \right|
\right|_\infty < \frac{1}{\alpha^j}A_4& \qquad \mbox{for j} \leq
2n-1
\end{eqnarray}
\end{subequations}
 Let $f \in C^\infty$ be compactly supported on $\mathbb{R}$.  Then as $\alpha \to 0$ the quantity
\begin{equation}
\label{alphaterm} \alpha^{2n} \int_{-\infty}^{\infty}  \left(
\frac{\p ^{2n}}{\p x ^{2n}} \ubar^\alpha \right) \, \ubar^\alpha_x f
\,  dx
\end{equation}
limits to 0.

\end{lemma}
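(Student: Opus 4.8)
The plan is to avoid estimating the $2n$-th derivative $\frac{\p^{2n}}{\p x^{2n}}\ubar^\alpha$ directly, since that order lies just beyond the available bounds. The key device is to set $w=\ubar^\alpha_x$, so that $\frac{\p^{2n}}{\p x^{2n}}\ubar^\alpha = w^{(2n-1)}$ and the quantity \eqnref{alphaterm} becomes $\alpha^{2n}\int_{-\infty}^{\infty} w^{(2n-1)}\,w\,f\,dx$. The hypotheses then read: $\int|w|\,dx\le A_2$ (an $\alpha$-independent $L^1$ bound on the undifferentiated factor $w$), while $\|w^{(k)}\|_\infty=\|\frac{\p^{k+1}}{\p x^{k+1}}\ubar^\alpha\|_\infty\le A_4\alpha^{-(k+1)}$ for every $k\le 2n-2$. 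The strategy is to integrate by parts until the highest derivative appearing is at most $2n-2$, while keeping one factor of $w$ undifferentiated so that the robust bound $A_2$ can absorb it.

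Concretely, I would integrate by parts $2n-1$ times, moving all derivatives off $w^{(2n-1)}$ onto the product $wf$; since $f$ is compactly supported all boundary terms vanish, giving $\int w^{(2n-1)}wf\,dx = -\int w\,(wf)^{(2n-1)}\,dx$. Expanding $(wf)^{(2n-1)}$ by the Leibniz rule, the single term in which all derivatives land on $w$ reproduces $-\int w\,w^{(2n-1)}f\,dx$, i.e.\ minus the original integral $I$. Isolating that term and solving yields the closed form
\[
I=-\tfrac12\sum_{k=0}^{2n-2}\binom{2n-1}{k}\int_{-\infty}^{\infty} w\,w^{(k)}\,f^{(2n-1-k)}\,dx,
\]
in which every surviving term now contains a derivative of $w$ of order at most $2n-2$ --- precisely the range where the $L^\infty$ bounds apply.

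It then remains to estimate each term by placing $w$ in $L^1$, $w^{(k)}$ in $L^\infty$, and the derivative of $f$ in $L^\infty$, so that $\left|\int w\,w^{(k)}f^{(2n-1-k)}\,dx\right|\le A_2\,A_4\,\alpha^{-(k+1)}\,\|f^{(2n-1-k)}\|_\infty$. Multiplying the whole expression by the prefactor $\alpha^{2n}$, the $k$-th term carries $\alpha^{2n-(k+1)}=\alpha^{2n-1-k}$, and since $k\le 2n-2$ each exponent satisfies $2n-1-k\ge 1$. Hence every term is bounded by a constant (depending on $A_2$, $A_4$, and finitely many sup-norms of derivatives of $f$) times a strictly positive power of $\alpha$, and the finite sum tends to $0$ as $\alpha\to0$.

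The main obstacle is exactly the mismatch in orders: the integrand involves the $2n$-th derivative while the hypotheses control only derivatives through order $2n-1$, so a direct estimate is impossible, and the weight $\alpha^{2n}$ alone does not save a naive symmetric split --- pairing $\frac{\p^n}{\p x^n}\ubar^\alpha$ with itself would leave an $\alpha$-independent constant rather than a vanishing quantity. The decisive points are therefore (i) recognizing that the undifferentiated factor $w=\ubar^\alpha_x$ carries the only $\alpha$-uniform bound $A_2$, so it must be \emph{preserved} through the integration by parts rather than differentiated, and (ii) the self-referential Leibniz term, which closes the identity and lets me solve for $I$ without ever touching the $2n$-th derivative. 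I expect verifying that all boundary terms vanish --- using the compact support of $f$ together with the smoothness and integrability of $\ubar^\alpha$ supplied by condition A --- to be the only remaining piece of routine care.
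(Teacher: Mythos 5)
Your argument is correct, and it reaches the conclusion by a genuinely different decomposition than the paper's. The paper splits the $2n$ derivatives symmetrically: it integrates by parts $n$ times to reach $\alpha^{2n}\int \ubar^{(n)}\,\frac{\p^n}{\p x^n}(\ubar_x f)\,dx$, expands by Leibniz, and then treats three groups separately --- the top term $\int\ubar^{(n)}\ubar^{(n+1)}f$ is rewritten as the perfect derivative $\tfrac12\int\frac{\p}{\p x}(\ubar^{(n)})^2 f$ and folded into the $\int(\ubar^{(n)})^2 f^{(1)}$ term, which is then integrated by parts a further $n-1$ times to land the $L^1$-bounded factor $\ubar^{(1)}$; the remaining lower-order terms are estimated with \emph{two} $L^\infty$ bounds against $\|f^{(n-i)}\|_1$, which closes because the combined exponent $n+(1+i)\le 2n-1$ for $i\le n-2$. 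You instead set $w=\ubar^\alpha_x$, transfer all $2n-1$ derivatives onto $wf$ in one pass, and use the self-referential Leibniz term to solve $2I=-\sum_{k\le 2n-2}\binom{2n-1}{k}\int w\,w^{(k)}f^{(2n-1-k)}$, after which every term pairs the single $\alpha$-uniform bound $\|w\|_1\le A_2$ with one $L^\infty$ bound of order at most $2n-1$ and a positive power $\alpha^{2n-1-k}\ge\alpha$ survives. Your route is the more economical one: it avoids the three-way case split, uses the perfect-derivative cancellation only once (implicitly, as the $k=2n-1$ Leibniz term), and never needs the $\|f^{(j)}\|_1$-type estimates; the paper's version makes the role of the intermediate orders $\ubar^{(n)},\dots,\ubar^{(2n-1)}$ more explicit but at the cost of extra bookkeeping. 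The only point deserving care in your write-up, as you note, is justifying the $2n-1$ successive integrations by parts (existence and local integrability of the intermediate derivatives and vanishing of boundary terms), which follows from the compact support of $f$ and the boundedness of $\ubar^{(j)}$ for $j\le 2n-1$ assumed in the lemma.
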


\begin{proof}  For convenience the $u^\alpha$ shall be denoted $u$. Integrate Equation (\ref{alphaterm}) by parts to obtain
\begin{equation}
\alpha^{2n} \int   \ubar^{(2n)}  \, \ubar_x  f \,  dx =\\
\alpha^{2n} \int  \ubar^{(n)} \,  \frac{\p ^{n}}{\p x ^{n}} (\ubar_x
f) \,  dx.
\end{equation}

Use product rule to expand $\left( \frac{\p ^{n}}{\p x ^{n}} \ubar_x
f  \right)$

\begin{equation}
\alpha^{2n} \int  \ubar^{(n)} \, \sum_{i=0}^n \binom{n}{i}
\ubar^{(1+i)} f^{(n-i)}  \,  dx.
\end{equation}

Take the absolute value, separate the last two terms of the binomial
expansion, and apply the triangle inequality.

\begin{eqnarray}
\leq & \left| \alpha^{2n} \int  \ubar^{(n)} \ubar^{(n+1)} f  \,dx
\right|\\
&  + \left| \alpha^{2n} n \int  \ubar^{(n)}  \ubar^{(n)}
f^{(1)} \,dx \right|\\
&  + \left| \alpha^{2n} \int  \ubar^{(n)} \, \sum_{i=0}^{n-2}
\binom{n}{i} \ubar^{(1+i)} f^{(n-i)}  \,  dx \right|
\end{eqnarray}

Begin by bounding the third term,
\begin{equation}
\underbrace{\left| \alpha^{2n} \int  \ubar^{(n)} \, \sum_{i=0}^{n-2}
\binom{n}{i} \ubar^{(1+i)} f^{(n-i)}  \,  dx \right|}_{\text{3rd
term}} \leq \alpha^{2n} \sum_{i=0}^{n-2} \binom{n}{i}
||\ubar^{(n)}||_\infty ||\ubar^{(1+i)}||_\infty   ||f^{(n-i)}||_1
\end{equation}

By applying the bound on $||\ubar^{(i)}||_\infty$,
\begin{equation}
\text{3rd term} \leq \sum_{i=0}^{n-2}  \binom{n}{i} \alpha^{n-i-1}
A_4||f^{(n-i)}||_1,
\end{equation}
which limits to $0$ as $\alpha \to 0$.

Next deal with the second term.
\begin{align}
\underbrace{\left| \alpha^{2n} n \int  \ubar^{(n)}  \ubar^{(n)}  f^{(1)}  \,dx \right|}_{\text{2nd term}} =& \\
=& \left| \alpha^{2n} n \int  \ubar^{(1)}  \frac{\p ^{n-1}}{\p x ^{n-1}} ( \ubar^{(n)}  f^{(1)} )  \,dx \right|\\
=& \left| \alpha^{2n} n \int  \ubar^{(1)}  \sum_{i=0}^{n-1} \binom{n-1}{i} \ubar^{(n+i)}  f^{(n-i)} )  \,dx \right|\\
\leq & \alpha^{2n} n \sum_{i=0}^{n-1} \binom{n-1}{i}
||\ubar^{(1)}||_1  ||\ubar^{(n+i)}||_\infty  ||f^{(n-i)}||_\infty
\end{align}

Again, apply the bound on $||\ubar^{(i)}||_\infty$ to get
\begin{equation}
\text{2nd term} \leq  n \sum_{i=0}^{n-1} \alpha^{n-i} \binom{n-1}{i}
A_4 ||\ubar^{(1)}||_1   ||f^{(n-i)}||_\infty.
\end{equation}
Since $f$ and all its derivatives are bounded and $||\ubar^{(1)}||_1
< A_2 $ the second term also limits to zero.

Now for the first term .
\begin{eqnarray}
\left| \alpha^{2n} \int  \ubar^{(n)} \ubar^{(n+1)} f  \,dx \right| =
&
\left| \alpha^{2n} \int  \frac{1}{2} \frac{\p}{\p x} ( \ubar^{(n)} )^2  f  \,dx \right|\\
=& \left| \frac{\alpha^{2n}}{2}  \int   \ubar^{(n)}  \ubar^{(n)}
f^{(1)}  \,dx \right|
\end{eqnarray}

This differs from the second term only by a constant,  so it must
limit to 0 as $\alpha \to 0$.

Thus we obtain the result
\begin{equation}
\lim _{\alpha \to 0} \alpha^{2n} \int_{-\infty}^{\infty}  \left(
\frac{\p ^{2n}}{\p x ^{2n}} \ubar \right) \, \ubar_x  f \,  dx =0.
\end{equation}
\end{proof}

The last piece needed is taken from Duoandikoetxea
\cite{DuoandikoetxeaJ:99a}.  The following lemma is a restatement of
Duoandikoetxea' Theorem 2.1 from page 25.

\begin{lemma}
\label{convolutionlemma} Let $g$ be an integrable function on
$\mathbb{R}$ such that $\int g =1$.  Define
$g^\alpha=\frac{1}{\alpha} g(\frac{x}{\alpha})$.  Then
$$\lim_{\alpha \to 0}  ||g^\alpha \ast f - f||_p = 0$$  if $f \in L^p, 1\leq p < \infty$ and uniformly (i.e. when $p=\infty$) if $f \in C_0(\mathbb{R})$.
\end{lemma}

With lemmas \ref{convolutionlemma} and \ref{manyderivativelimit} we
can now prove the following theorem regarding convergence to weak
solutions.

\begin{theorem}
For any $g$ satisfying condition A, the solutions $u^\alpha$ to the
CFB equations converge to a weak solution of the inviscid Burgers
equation.
\end{theorem}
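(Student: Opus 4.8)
The plan is to substitute the solution $u^\alpha$ into the weak formulation of the inviscid Burgers equation and to show, using the bounds already collected together with Lemmas \ref{manyderivativelimit} and \ref{convolutionlemma}, that every term except the Burgers flux term is negligible as $\alpha \to 0$. Fix a test function $f \in C_c^\infty(\mathbb{R}\times[0,\infty))$. Multiplying \eqnref{IVP2a} by $f$ and integrating by parts in $t$ gives
\begin{equation*}
-\int\int u^\alpha f_t \, dx\, dt - \int u_0 f(x,0)\, dx + \int\int \ubar^\alpha u^\alpha_x f \, dx\, dt = 0.
\end{equation*}
Since the subsequence $u^\alpha \to u$ in $L^1_\text{loc}$ and $f_t$ is bounded with compact support, the first term converges to $-\int\int u f_t$, while the second term does not depend on $\alpha$. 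Everything therefore hinges on the nonlinear term.

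To analyze $\int \ubar^\alpha u^\alpha_x f$ I would use the operator identity attached to condition A for the scaled filter, namely $u^\alpha = \ubar^\alpha + \sum_{j=1}^n (-1)^j C_j \alpha^{2j}\frac{\p^{2j}}{\p x^{2j}}\ubar^\alpha$. Differentiating and substituting for $u^\alpha_x$ splits the nonlinear term into a conservation piece and a sum of $\alpha$-weighted remainders,
\begin{equation*}
\int \ubar^\alpha u^\alpha_x f \, dx = -\tfrac12\int (\ubar^\alpha)^2 f_x \, dx + \sum_{j=1}^n (-1)^j C_j \alpha^{2j}\int \ubar^\alpha \frac{\p^{2j+1}}{\p x^{2j+1}}\ubar^\alpha\, f \, dx,
\end{equation*}
where the conservation piece comes from writing $\ubar^\alpha\ubar^\alpha_x = \tfrac12\p_x(\ubar^\alpha)^2$ and integrating by parts. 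For the main term I would first argue that $\ubar^\alpha \to u$ in $L^1_\text{loc}$ (combining Lemma \ref{convolutionlemma}, which applies since $\int g = \hat g(0)=1$, with the convergence $u^\alpha \to u$ and the bound $\|g\|_1<\infty$), and then invoke the uniform $L^\infty$ bound \eqnref{bounded} to write $(\ubar^\alpha)^2 - u^2 = (\ubar^\alpha - u)(\ubar^\alpha + u)$ and upgrade the convergence to $L^2_\text{loc}$, so that $-\tfrac12\int\int(\ubar^\alpha)^2 f_x \to -\tfrac12\int\int u^2 f_x$.

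It then remains to kill each remainder $\alpha^{2j}\int\ubar^\alpha\,\p_x^{2j+1}\ubar^\alpha f$. The hypotheses of Lemma \ref{manyderivativelimit} hold for $u^\alpha$, since \eqnref{bounded}, \eqnref{totalvariation} and the condition-A derivative bounds $\|\p_x^i\ubar^\alpha\|_\infty \le \alpha^{-i}A_4$ ($i\le 2n-1$) are all established. Integrating by parts once turns the top remainder ($j=n$) into $-\alpha^{2n}\int \p_x^{2n}\ubar^\alpha\,\ubar^\alpha_x f$ plus a companion term $-\alpha^{2n}\int \p_x^{2n}\ubar^\alpha\,\ubar^\alpha f_x$, and the first of these is exactly the quantity shown to vanish in Lemma \ref{manyderivativelimit}; the lower remainders ($j<n$) and the companion term are handled by the same integration-by-parts bookkeeping used in that lemma's proof, distributing the $\alpha^{2j}$ weight across balanced derivatives and absorbing part of it through the $\alpha$-independent bounds $\|\ubar^\alpha_x\|_1 \le A_2$ and $\|\ubar^\alpha\|_\infty \le A_1$.

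Passing to the limit in the weak identity then yields $-\int\int u f_t - \int u_0 f(x,0)\,dx - \tfrac12\int\int u^2 f_x = 0$ for every admissible $f$, which is precisely the statement that the limit $u$ is a weak solution of the inviscid Burgers equation. I expect the remainder estimates to be the main obstacle: each individual high derivative $\p_x^{2j+1}\ubar^\alpha$ blows up faster than $\alpha^{-2j}$, so the decay is invisible term by term and only emerges after the derivatives are symmetrized by integration by parts and a portion of the $\alpha$ weight is transferred onto the bounded total-variation factor $\ubar^\alpha_x$ — exactly the delicate accounting isolated in Lemma \ref{manyderivativelimit}.
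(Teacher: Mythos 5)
Your proposal is correct and follows essentially the same route as the paper: both isolate the conservation-form flux $\tfrac12(\ubar^\alpha)^2$, express the discrepancy $u^\alpha-\ubar^\alpha$ through the condition-A operator identity, kill the $\alpha^{2j}$-weighted terms $\alpha^{2j}\int \p_x^{2j}\ubar^\alpha\,\ubar^\alpha_x f$ with Lemma \ref{manyderivativelimit}, and use Lemma \ref{convolutionlemma} plus $L^1_\text{loc}$ convergence for the remaining pieces. The only cosmetic difference is the order of operations (you substitute condition A before integrating by parts in $x$, the paper after), and note that your ``companion term,'' once summed over $j$, is exactly the paper's $\int(\ubar^\alpha-u^\alpha)\ubar^\alpha f_x$, which is dispatched more simply by the $L^1$ estimate $\|u^\alpha-\ubar^\alpha\|_{L^1(\Omega)}\to 0$ than by further derivative bookkeeping.
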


\begin{proof}
It was already shown that $u^\alpha$ converges to a function $u$. To
show this is a weak solution of the inviscid Burgers equation, we
need to prove that for any test function $f \in C^\infty$ that has
compact support on $\mathbb{R} \times [0,T]$ that
\begin{equation}
\label{weaksolution} \int_0^T \int_\mathbb{R} u f_t +\frac{1}{2}u^2
f_x \,dx\,dt=0.
\end{equation}

Begin by rewriting Equation \ref{IVP2a} as
\begin{equation}
\label{rewrite}
u^\alpha_t+\left(\frac{1}{2}(\ubar^\alpha)^2\right)_x=(\ubar_x^\alpha-u_x^\alpha)\ubar^\alpha.
\end{equation}

Multiply by the test function $f$ and integrate over $\mathbb{R}
\times [0,T]$
\begin{equation}
\int_0^T  \int_ \mathbb{R} u^\alpha_t f
+\left(\frac{1}{2}(\ubar^\alpha)^2\right)_x f \, dx dt =\int_0^T
\int_ \mathbb{R}  (\ubar_x^\alpha-u_x^\alpha)\ubar^\alpha f \, dx dt
\end{equation}

Integrate by parts.
\begin{eqnarray}
 \int_0^T  \int_ \mathbb{R} u^\alpha f_t
+\left(\frac{1}{2}(\ubar^\alpha)^2\right) f_x \, dx dt =\int_0^T
\int_ \mathbb{R} (\ubar^\alpha-u^\alpha)\ubar^\alpha f_x \, dx dt \nonumber \\
\label{fullweakform} + \int_0^T  \int_ \mathbb{R}
(\ubar^\alpha-u^\alpha)\ubar_x^\alpha f \, dx dt
\end{eqnarray}

Taking the limit as $\alpha \to 0$ of the left side,  you get
$$\int_0^T  \int_ \mathbb{R} u f_t +\left(\frac{1}{2}(u)^2\right) f_x \, dx dt$$
Clearly then if the right hand side limits to zero, we have $u$ is a
weak solution to Burgers equation.

Begin with the first term on the right hand side of Equation
\ref{fullweakform}. The first term can be shown to limit to zero by
noting that $||u^\alpha||_{\infty}$ has a uniform bound of $A_1$,
and that since $f \in C^\infty$ with compact support, there exists
an $F \in \mathbb{R}^+$ such that $||f||_{\infty} \leq  F$ and
$||f_x||_{\infty} \leq  F$. Additionally let $f$ be supported on the
compact set $\Omega.$ This leads to the bound
\begin{equation}
\int_0^T \int_\mathbb{R} (u^\alpha-\ubar^\alpha)u^\alpha f_x\,dx\,dt
\leq F\, A_1\, T ||u^\alpha-\ubar^\alpha||_{L^1(\Omega)}.
\end{equation}

Take the limit of $||u^\alpha-\ubar^\alpha||_{L^1(\Omega)}$.  Break
apart the norm with the triangle inequality to get
\begin{eqnarray*}
\lim_{\alpha \to 0} ||u^\alpha-u^\alpha \ast g^\alpha|| & \leq &\lim_{\alpha \to 0} ||u^\alpha-u|| +||u-u \ast g^\alpha|| +|| g^\alpha \ast (u-u^\alpha)||\\
& \leq &\lim_{\alpha \to 0} ||u^\alpha-u|| +||u-u \ast g^\alpha||
+|| g^\alpha|| || u-u^\alpha||,
\end{eqnarray*}
where the norms are all  $||\cdot||_{L^1(\Omega)}$.  The first and
third term limit to zero as $u^\alpha$ converges to $u$ in
$L^1_{loc}$.  The second term limits to zero by lemma
\ref{convolutionlemma}.

Now deal with the second term from equation (\ref{fullweakform}).
Since $g$ satisfies condition A,
$$(\ubar^\alpha-u^\alpha)=\sum_{j=1}^n C_i \alpha^{2j} \frac{\p^{2j}}{\p x ^{2j}} \ubar^{\alpha}$$ the second term can be rewritten as

$$\sum_{j=1}^n C_i \int_0^T  \int_ \mathbb{R} \alpha^{2j} \frac{\p^{2j}}{\p x ^{2j}} \ubar^{\alpha} \ubar_x^\alpha f \, dx dt$$

By Lemma \ref{manyderivativelimit} every term in the sum limits to
zero.  Hence the sum limits to zero.

Therefore the limit as $\alpha \to 0$ of Equation
(\ref{fullweakform}) becomes
\begin{equation}
\int_0^T \int_\mathbb{R} u f_t +\frac{1}{2}u^2 f_x \,dx\,dt=0.
\end{equation}
proving $u$ is a weak solution of the inviscid Burgers equation.
\end{proof}

\section{Convergence to the Entropy Solution}\label{entropysolutionsection}

In this section we will first examine some of the properties of
non-entropic solutions, that is solutions that are a weak solution
to the inviscid Burgers Equation, but do not satisfy the entropy
condition.  By examining these properties, it will be shown that the
solutions to the CFB equation lack certain properties found in all
non-entropic solution.  Thus it will be shown that the solutions to
the CFB equations converge to the entropy solution of the inviscid
Burgers Equation.

This examination will be limited to a class of initial conditions.
Specifically, we intend to examine initial conditions that are
continuously differentiable, and are bell shaped, i.e. have an
interval where the functions are increasing, followed by an interval
where the functions are decreasing.  Functions that satisfy this
condition will be referred to as satisfying condition B.  It is for
these functions as initial conditions that we will prove convergence
to the entropy solution.

\begin{definition}\label{conditionB}
Let $u(x) \in C^1(\mathbb{R})$ and  $u_x \ge 0$ over $(-\infty, p)$
and $u_x \le 0$ over $(p, -\infty)$ for some $p$.  Additionally let
$u(x)$ have finite limits as $x \to \pm \infty$  Then $u(x)$ is said
to have satisfied condition B.
\end{definition}

\subsection{Non-entropic Weak Solutions}

There are three classic types of entropy violating weak solutions to
the inviscid Burgers equation. This subsection shows examples of
each type.  The first is when you start with an increasing shock in
the initial conditions and then that shock remains, propagating at
the speed dictated by the Rankine-Hugoniot jump conditions. An
example of this is
\begin{equation*}
u(x,t)=
\begin{cases}
0 & \text{if $x<\frac{1}{2}t$,}\\
1 &\text{if $  \frac{1}{2}t \le x$},
\end{cases}
\end{equation*}
taken from Lax \cite{LaxPD:73a} and is illustrated in figure
\ref{example1}.

\begin{figure}[!ht]
\begin{center}
\begin{minipage}{0.48\linewidth} \begin{center}
  \includegraphics[width=.9\linewidth]{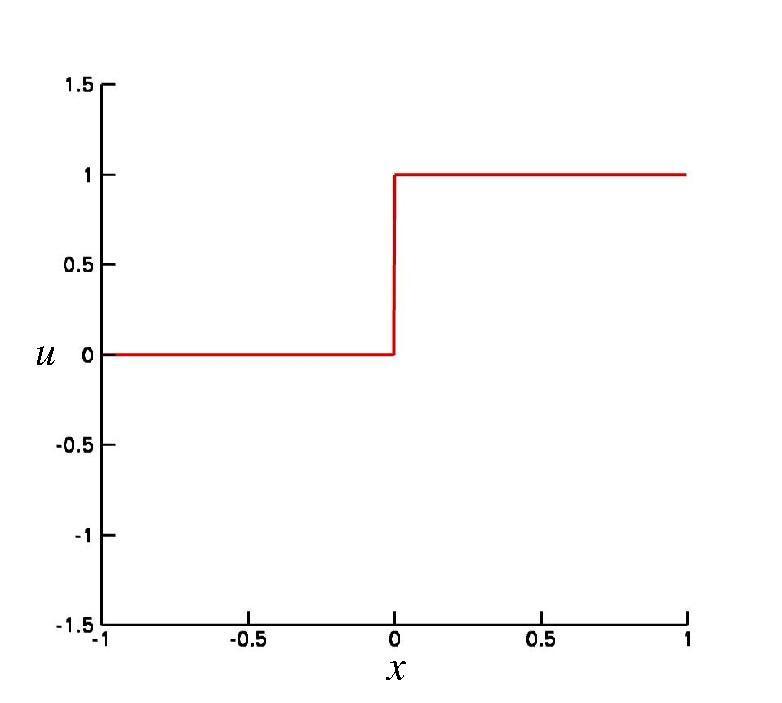}
\end{center} \end{minipage}
\begin{minipage}{0.48\linewidth} \begin{center}
  \includegraphics[width=.9\linewidth]{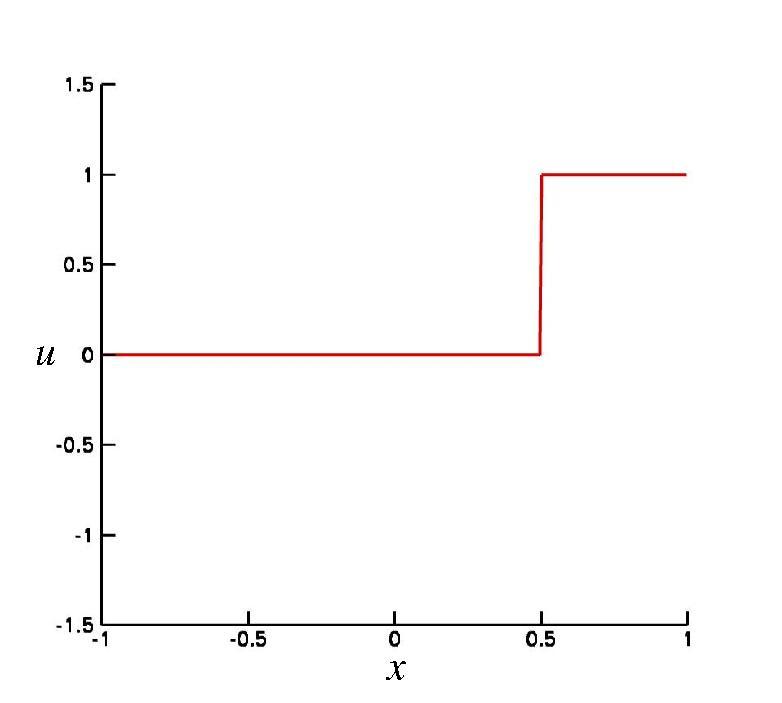}
\end{center} \end{minipage}\\
\begin{minipage}{0.48\linewidth}\begin{center} (a) \end{center} \end{minipage}
\begin{minipage}{0.48\linewidth}\begin{center} (b) \end{center}
\end{minipage}\vspace{-2mm}
\caption{Here a discontinuity is introduced in the initial
conditions and remains.  The shock must travel at the speed dictated
by the Rankine-Hugoniot jump conditions to be a weak solution.}
 \label{example1}
\end{center}
\end{figure}

The second case is when a shock already exists and then splits into
multiple shocks, one of which is an entropy violating shock. All the
shocks move with the speed dictated by the Rankine-Hugoniot
conditions. For $a \ge 1$ the following is a weak solution to
Burgers Equation.  This example was taken from Oleinik
\cite{OleinikOA:57a} and is illustrated in figure \ref{example2}
\begin{equation*}
u(x,t)=
\begin{cases}
1 & \text{if $x<\frac{1-a}{2}t$,}\\
-a &\text{if $\frac{1-a}{2}t \le x < 0$}\\
a &\text{if $0 \le x < \frac{a-1}{2}t$}\\
-1 &\text{if $  \frac{a-1}{2}t \le x$}
\end{cases}
\end{equation*}

\begin{figure}[!ht]
\begin{center}
\begin{minipage}{0.48\linewidth} \begin{center}
  \includegraphics[width=.9\linewidth]{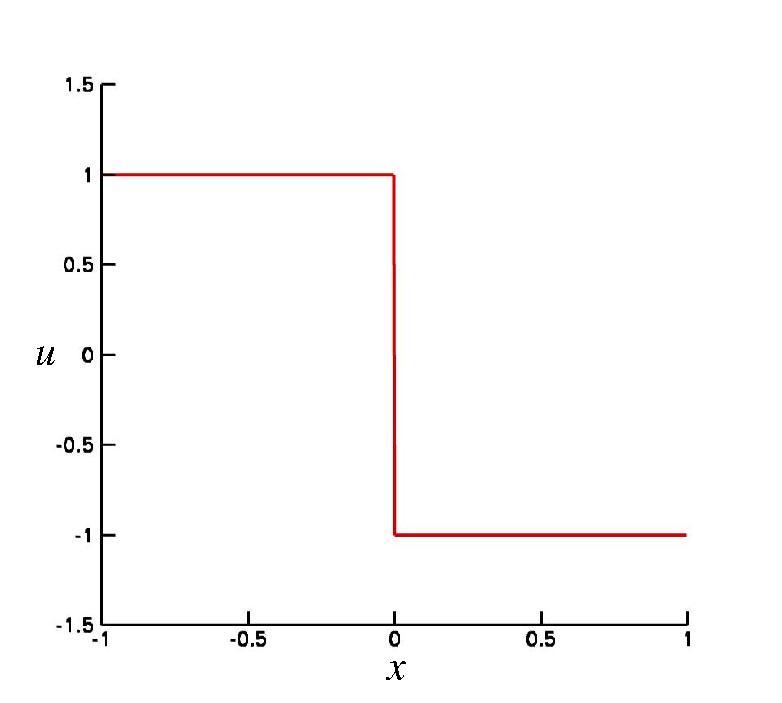}
\end{center} \end{minipage}
\begin{minipage}{0.48\linewidth} \begin{center}
  \includegraphics[width=.9\linewidth]{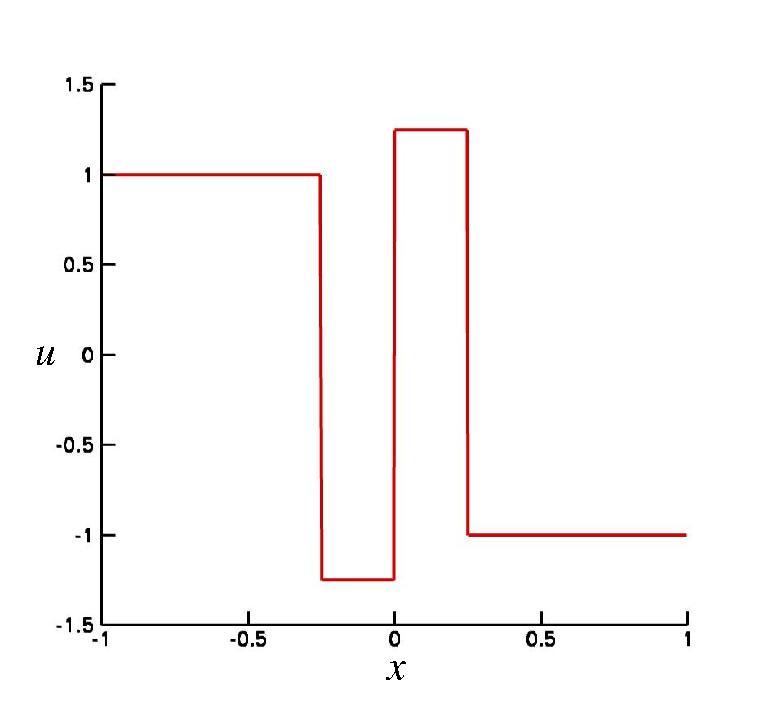}
\end{center} \end{minipage}\\
\begin{minipage}{0.48\linewidth}\begin{center} (a) \end{center} \end{minipage}
\begin{minipage}{0.48\linewidth}\begin{center} (b) \end{center}
\end{minipage}\vspace{-2mm}
\caption{A shock can split into multiple shocks and still remain a
weak solution.  (a)  The initial conditions.  (b)  The solution
after the shock splitting has occurred. } \label{example2}
\end{center}
\end{figure}

Another example is spontaneous shock formation with shocks forming
out of a continuous interval.  For $a>0$ the following is a weak
solution to Burgers Equation.  This example was taken from Serre
\cite{SerreD:99a} and is illustrated in figure \ref{example3}.
\begin{equation*}\label{spontanteousexample}
u(x,t)=
\begin{cases}
0 & \text{if $x<\frac{-a}{2}t$,}\\
-a &\text{if $\frac{-a}{2}t \le x < 0$}\\
a &\text{if $0 \le x < \frac{a}{2}t$}\\
0 &\text{if $  \frac{a}{2}t \le x$}
\end{cases}
\end{equation*}

\begin{figure}[!ht]
\begin{center}
\begin{minipage}{0.48\linewidth} \begin{center}
  \includegraphics[width=.9\linewidth]{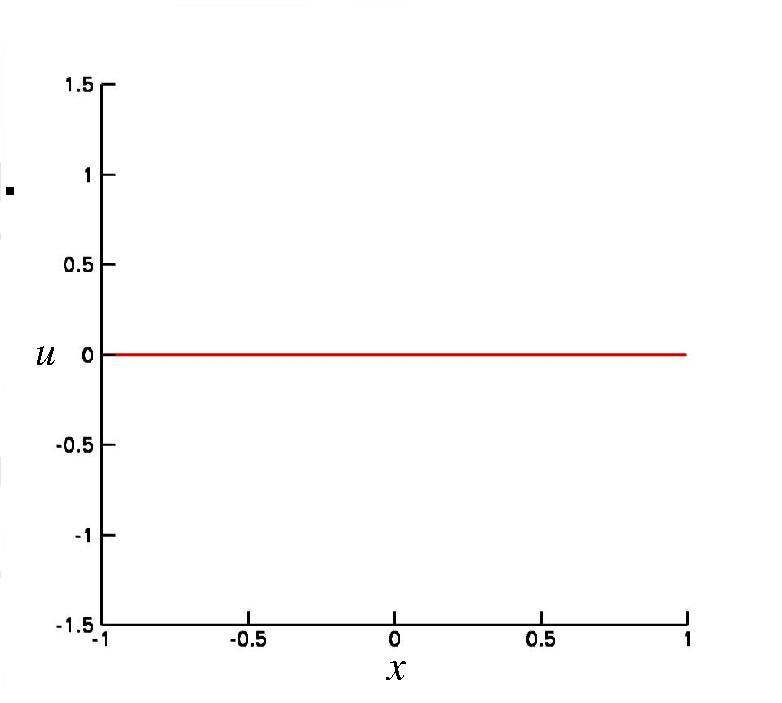}
\end{center} \end{minipage}
\begin{minipage}{0.48\linewidth} \begin{center}
  \includegraphics[width=.9\linewidth]{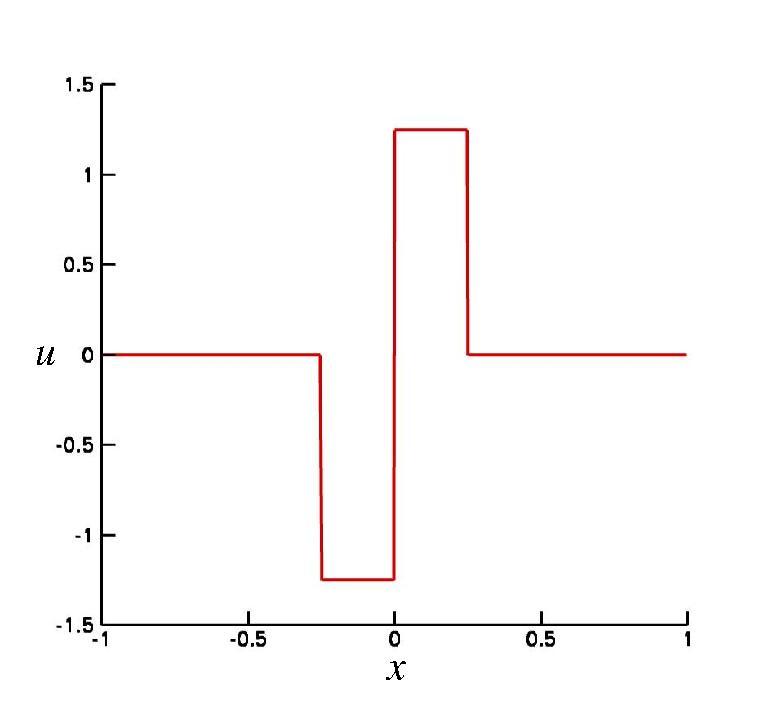}
\end{center} \end{minipage}\\
\begin{minipage}{0.48\linewidth}\begin{center} (a) \end{center} \end{minipage}
\begin{minipage}{0.48\linewidth}\begin{center} (b) \end{center}
\end{minipage}\vspace{-2mm}
\caption{A shock can form from a continuous interval.  (a)  The
initial conditions.  (b)  The solution after spontaneous shock
formation has occurred. } \label{example3}
\end{center}
\end{figure}

In the next section it is shown that these three cases exemplify the
only type of entropy violating behavior possible.

\subsection{Decreasing Slope along characteristics}\label{decreasingslopesection}

By examining the inviscid Burgers Equation, it is possible to see
that a non-entropic solution cannot form through the steepening of
the solution.  With this information we can then limit the ways a
non-entropic solution can come into being. Begin with the inviscid
Burgers Equation
\begin{equation}
 u_t+uu_x=0.
\end{equation}

In section \ref{previouslyestablishedfacts} it was seen that along
the characteristics
\begin{equation}
\xi=x_0+u_0(\xi)t
\end{equation}
that the value of $u$ remains constant.  This is true until
characteristics crossed, at which point a shock is formed.

Here a similar approach is taken, but on the derivative of the
inviscid Burgers Equation.  Differentiate the inviscid Burgers
Equation to get

\begin{equation}
 \frac{d}{dt} (u_x)+ u\frac{d}{dx}(u_x)=-(u_x)^2
\end{equation}

Now if you examine this equation you find that along the same
characteristics $\xi=x_0+u_0(\xi)t$ that the quantity $u_x$ is
governed by
\begin{equation}
 \frac{d}{dt} u_x=-(u_x)^2.
\end{equation}
Thus for piecewise differentiable solutions, $u_x$ is always
decreasing along characteristics and an increasing shock cannot form
from the steepening of the solution.

Now consider a solution that begins with initial conditions
satisfying condition B.  That solution is a continuously
differentiable solution to the inviscid Burgers equation and is thus
an entropy solution. It will remain an entropy solution until an
increasing jump is formed.  An entropy solution for initial
conditions satisfying condition B will be piecewise continuous and
thus from above will not steepen into an increasing shock.  From
this we conclude that an increasing shock can only occur if it
exists in the initial conditions or must form instantaneously as it
cannot form from the steepening of the solution.  It can either form
at existing points of discontinuity or form at points of continuity,
which this paper refers to as shock splitting and spontaneous shock
formation respectively.

\subsection{Entropy violating solutions are not reparameterizations of initial conditions}

In section \ref{previouslyestablishedfacts} it was established that
the entropy solution of the inviscid Burgers equation is a
reparameterization of initial conditions when the initial conditions
are continuous.  This subsection shows that a non-entropic solution
cannot be both a weak solution and a reparameterization of initial
conditions satisfying condition B.

We first begin be examining some consequences of being both a weak
solution and a reparameterization of initial conditions satisfying
condition B. Then we assume that there is a non-entropic solution
that is both a weak solution and a reparameterization and show that
this is a contradiction.

If a function is a reparameterization of initial conditions
satisfying condition B, it is easy to see that the
reparameterization will have one interval where it is increasing
followed by an interval where it is decreasing.  It is also clearly
bounded. However, it need not be continuous.  As a direct
consequence of the Monotone Convergence Theorem for sequences, every
point on the reparameterization will have a well defined left and
right sided limit. Since the left and right sided limits are well
defined, the only type of discontinuity allowed is a jump
discontinuity.  If a more rigorous explanation is desired we refer
the reader to Section 5.7 in Davidson and Donsig
\cite{DonsigAP:02a}.

Additionally any function satisfying condition B will have bounded
variation.  Thus any function that is a reparameterization will have
variation bounded by the original function's variation.  Thus if a
solution is a reparameterization of initial conditions initial
conditions satisfying condition B then it is of bounded variation.

From Theorem 1.8.1 on page 21 and page 52 in Dafermos
\cite{DafermosCM:99a} we know that a function $u$ that is of class
$BV_{loc}$ and is a weak solution will satisfy the Rankine-Hugoniot
jump conditions at every jump discontinuity. This means that if
$\chi$ is the location of a discontinuity then
\begin{equation}\label{rankinehugoniotjumpcondition}
\frac{d}{dt}\chi= \frac{u(\chi^-,t)+u(\chi^+,t)}{2}.
\end{equation}
Thus if the solution is a weak solution and a reparameterization of
initial conditions satisfying condition B, all of its
discontinuities must be jump discontinuities satisfying the
Rankine-Hugoniot jump conditions.

To show that a function is not a reparameterization of initial
conditions satisfying condition B, it is sufficient to find three
points $x_1<x_2<x_3$ such that $u(x_1)>u(x_2)$ and $u(x_2)< u(x_3)$.
Essentially a function satisfying condition B is bell shaped and
finding these three points finds an upsidedown bell, which cannot
happen in a reparameterization.  This is precisely the method used
to show that a non-entropic solution cannot be both a
reparameterization of the initial conditions.

Since we are considering only initial conditions satisfying
condition B, we are only beginning with continuous initial
conditions. Thus from section \ref{decreasingslopesection} the only
possibility of having a non-entropic solution is through spontaneous
shock formation or by shock splitting.  It will be shown that if
either of these occur, then the non-entropic solution fails to be a
reparameterization of the initial conditions.

The following lemma is used later on when dealing with spontaneous
shock formation and shock splitting.  Because a non-entropic
solution must still be a weak solution, spontaneous shock formation
and shock splitting must behave is certain ways. The inviscid
Burgers equation can be considered as a conservation law of
wavemass, $\int u$.  Lemma \ref{thatonelemma} uses this fact to
place restrictions on how spontaneous shock formation and shock
splitting can occur.

Lemma \ref{thatonelemma} addresses the area between the leftmost and
rightmost shock, when a spontaneous shock formation or shock
splitting occur.  Essentially it says that if the area between the
shocks has a higher value than the value on the outside of the
shocks, then wavemass has been created, and it is no longer a weak
solution to the inviscid Burgers equation.  Figure
\ref{wavemasscreated}, shows illustrations of this.

\begin{figure}[!ht]
\begin{center}
\begin{minipage}{0.48\linewidth} \begin{center}
  \includegraphics[width=.9\linewidth]{FIGS/discpropogate1.jpg}
\end{center} \end{minipage}
\begin{minipage}{0.48\linewidth} \begin{center}
  \includegraphics[width=.9\linewidth]{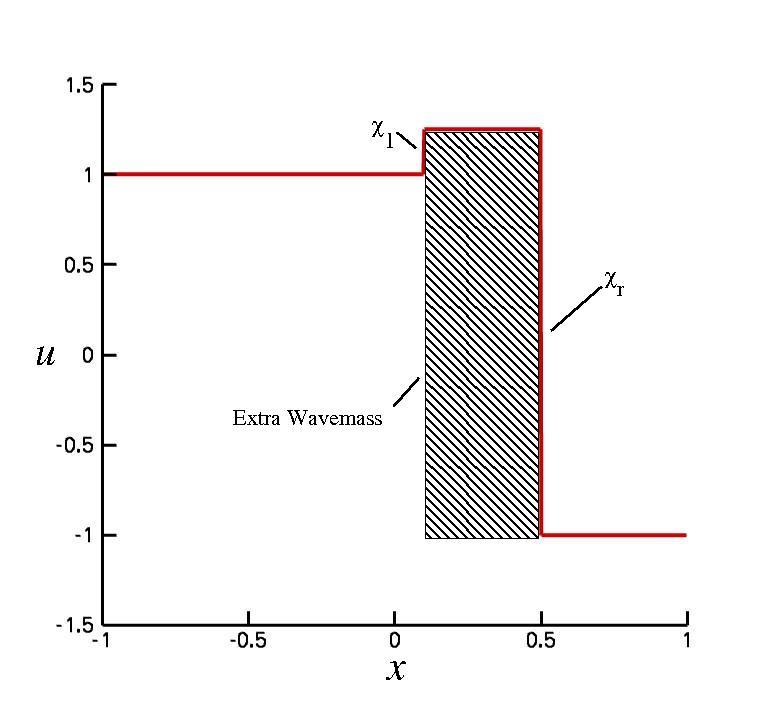}
\end{center} \end{minipage}\\
\begin{minipage}{0.48\linewidth}\begin{center} (a) \end{center} \end{minipage}
\begin{minipage}{0.48\linewidth}\begin{center} (b) \end{center}
\end{minipage}\vspace{-2mm}
\caption{Wavemass is created in shock splitting or spontaneous shock
formation if the middle area is greater than the outer area's values
at the shocks. (a)The conditions of $u(x,t)$ before shock splitting.
(b)If a shock splits and the middle area has higher values than its
surroundings, extra wavemass has been created.  Here $\chi_l$ and
$\chi_r$ are the left most and right most shocks as used in
lemma\ref{thatonelemma}.  The extra wavemass is indicated. }
\label{wavemasscreated}
\end{center}
\end{figure}

The lemma proves that if the area between the leftmost and rightmost
shock has values greater than those on its borders, then $u(x,t)$
cannot be a weak solution of the inviscid Burgers equation and a
reparameterization of initial conditions.  This is a proof by
contradiction, so we assume that $u(x,t)$ is a weak solution and a
reparameterization of initial conditions satisfying condition B,
which places several constraints on it. Such a weak solution to the
inviscid Burgers equation must satisfy the Rankine-Hugoniot jump
conditions, Equation (\ref{rankinehugoniotjumpcondition}).

Additionally, weak solutions must satisfy the integral form of the
conservation law
\begin{equation}
\left. \int_{g}^{h} u \, dx  \right|_{t_1}^{t_2}=\left.
\int_{t_1}^{t_2} \frac{-u^2}{2} \right|_{g}^{h} \, dt,
\end{equation}
or if $g$ and $h$ are moving boundaries
\begin{equation}\label{weaksolutiondefinitionwithmovingboundaries}
\left. \int_{g(t)}^{h(t)} u \, dx  \right|_{t_1}^{t_2}=\left.
\int_{t_1}^{t_2} \frac{-u^2}{2} \right|_{g(t)}^{h(t)}  +
\left(\frac{d}{dt} h(t)\right)u(h(t),t) - \left(\frac{d}{dt}
g(t)\right)u(g(t),t)  \, dt.
\end{equation}
The second definition, Equation
\ref{weaksolutiondefinitionwithmovingboundaries}, is used in the
following lemma.

\begin{lemma}\label{thatonelemma}
Assume that $u(x,t)$ takes the form
\begin{equation}
u(x,t)=
\begin{cases}
a(x,t) & \text{if $          x <\chi_l(t)$,}\\
b(x,t) & \text{if $\chi_l(t) \le x < \chi_r(t)$}\\
c(x,t) & \text{if $\chi_r(t) \le x $},
\end{cases}
\end{equation}
where $\chi_l(t)$ and $\chi_r(t)$ are locations of discontinuities
and $\chi_l(t_1)=\chi_r(t_1)=x^*$.  At time $t_1$ let $a(x^{*-},t_1)
\geq c(x^{*+},t_1).$  If for some period of time after $t_1$ and all
$x\in(\chi_l,\chi_r)$, $b(x,t)>a(\chi_l(t)^-,t)$  and
$b(x,t)>c(\chi_r(t)^+,t)$, then $u(x,t)$ cannot be a weak solution
of the inviscid Burgers equation.
\end{lemma}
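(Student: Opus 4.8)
The plan is to argue by contradiction, assuming $u(x,t)$ is a weak solution of the inviscid Burgers equation that (as in the standing hypotheses of this section) is a $BV$ reparameterization of condition-B initial data. Under that assumption all three pieces are smooth in their own regions, and by the Dafermos result already cited both $\chi_l$ and $\chi_r$ satisfy the Rankine--Hugoniot condition \eqref{rankinehugoniotjumpcondition}. The central object will be the \emph{wavemass in the middle region},
\begin{equation*}
M(t)=\int_{\chi_l(t)}^{\chi_r(t)} u\,dx,\qquad M(t_1)=0,
\end{equation*}
the last equality holding because $\chi_l(t_1)=\chi_r(t_1)=x^*$. The goal is to show that the hypothesis ``$b$ exceeds its two borders'' forces $M$ to increase out of $0$ faster than the fluxes through the two moving shocks can supply, which is exactly the statement that wavemass has been created and hence that $u$ cannot be a weak solution.

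First I would compute $\dot M$ by applying the moving-boundary integral form \eqref{weaksolutiondefinitionwithmovingboundaries} with $g=\chi_l$, $h=\chi_r$, using the one-sided interior limits $b_l:=b(\chi_l^+,t)$ and $b_r:=b(\chi_r^-,t)$, and then eliminating the shock speeds with \eqref{rankinehugoniotjumpcondition}, $\dot\chi_l=\tfrac12\!\left(a(\chi_l^-,t)+b_l\right)$ and $\dot\chi_r=\tfrac12\!\left(b_r+c(\chi_r^+,t)\right)$. Writing $a_l:=a(\chi_l^-,t)$ and $c_r:=c(\chi_r^+,t)$, the boundary-flux terms $-\tfrac12 b_r^2+\tfrac12 b_l^2+\dot\chi_r b_r-\dot\chi_l b_l$ collapse to
\begin{equation*}
\frac{d}{dt}M(t)=\frac12\left(b_r\,c_r-a_l\,b_l\right),
\end{equation*}
and in parallel the width obeys $\dot w=\tfrac12\big[(b_r-b_l)+(c_r-a_l)\big]$ with $w=\chi_r-\chi_l\ge 0$, $w(t_1)=0$. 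Now I would invoke the hypothesis $a(x^{*-},t_1)\ge c(x^{*+},t_1)$, i.e. $a_l\to a^*\ge c^*\leftarrow c_r$ as $t\to t_1^+$, together with continuity of $b$ on the shrinking interval, which forces $b_r-b_l\to 0$. These give $\limsup_{t\to t_1^+}\dot w(t)\le \tfrac12(c^*-a^*)\le 0$, so the middle region cannot open; equivalently $M$ cannot grow, contradicting the assumption that $b>\max(a_l,c_r)$ on a genuine interval $(\chi_l,\chi_r)$ of positive width for $t>t_1$. The pointwise bound $b>\tfrac12(a_l+c_r)$ then certifies that any positive $M(t)$ would exceed $\tfrac12(a_l+c_r)\,w(t)$, the most the borders could account for, making the creation of wavemass explicit.

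The main obstacle is the degenerate, strict-equality case $a^*=c^*$, which corresponds precisely to spontaneous shock formation at a point of continuity rather than splitting of a pre-existing jump. There $\dot w(t_1^+)$ and $\dot M(t_1^+)$ both vanish to first order, so the one-sided limit argument above is inconclusive and a finer estimate is required. To close this case I would integrate $\dot M=\tfrac12(b_r c_r-a_l b_l)$ over a full interval $[t_1,t_2]$ and compare the result against the lower bound $M(t_2)>\tfrac12\big(a_l(t_2)+c_r(t_2)\big)\,w(t_2)$ coming directly from the definition of $M$ and the strict inequalities $b>a_l$, $b>c_r$, exploiting that these inequalities are uniform in $x$ across the middle region. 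The delicate point is controlling the time-dependence of $a_l$ and $c_r$ (which evolve by transport along the outer characteristics) so that the strict gap $b-\max(a_l,c_r)$ does not close as $t\to t_1^+$; managing these signs and this limit is where the real work lies, whereas the flux computation and the Rankine--Hugoniot substitution are routine.
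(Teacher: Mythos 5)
Your flux bookkeeping is correct --- with the interior traces $b_l,b_r$ and the Rankine--Hugoniot speeds one does get $\tfrac{d}{dt}M=\tfrac12(b_rc_r-a_lb_l)$ and $\dot w=\tfrac12\bigl[(b_r-b_l)+(c_r-a_l)\bigr]$ --- and your overall strategy (wavemass is created between the two outermost shocks) is the same as the paper's. But the argument as written has a genuine gap, in fact two. First, the step ``$b_r-b_l\to 0$ by continuity of $b$ on the shrinking interval'' is not available: the lemma explicitly allows further shocks between $\chi_l$ and $\chi_r$, so $b$ need not be continuous, and even if $b(\cdot,t)$ were continuous for each fixed $t>t_1$, the interval shrinks to a point while $b$ itself varies with $t$, so without a modulus of continuity uniform in $t$ nothing forces $b_r-b_l\to 0$. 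Since the hypothesis only bounds $b$ from \emph{below}, $b_r-b_l$ can be large and positive, $\dot w$ can be positive, and the middle region genuinely can open --- indeed the lemma's hypothesis presupposes that it does. So ``the middle region cannot open'' is not the right target. Second, even granting that step, your limsup argument only closes the case $a^*>c^*$; the degenerate case $a^*=c^*$ is precisely spontaneous shock formation, which is one of the two situations the lemma exists to rule out, and you explicitly leave ``the real work'' there undone.

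The paper avoids both problems by not trying to prevent the interval from opening. It takes the opening as given, bounds the accumulated middle mass from below by $\min_{x}b(x,t_2)\,\bigl(\chi_r(t_2)-\chi_l(t_2)\bigr)$ --- which uses only the lower bound on $b$, so no continuity of $b$ is needed --- rewrites the width as $\int_{t_1}^{t_2}(\dot\chi_r-\dot\chi_l)\,dt$, substitutes the Rankine--Hugoniot speed of $\chi_l$, and peels off two terms: $L=(\min_x b-c_r)(\dot\chi_r-\dot\chi_l)$, strictly positive whenever the interval is opening because $\min_x b>c_r$ by hypothesis, and $M=(a_l-c_r)(b_l-c_r)/2$, which tends to a nonnegative quantity as $t_2\to t_1$ whether $a^*>c^*$ or $a^*=c^*$. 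What remains is exactly the right-hand side of Equation~(\ref{weaksolutiondefinitionwithmovingboundaries}), so the left side strictly exceeds the right side for $t_2$ near $t_1$, and both cases are handled at once. If you want to salvage your version, the fix is essentially to replace your pointwise $\dot w\le 0$ claim with this integrated lower bound on $M(t_2)$ in terms of $\min_x b$ and the shock speeds; the exact identity $\dot M=\tfrac12(b_rc_r-a_lb_l)$ by itself does not produce the contradiction.
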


\begin{proof}
Begin by assuming that $u(x,t)$ is a weak solution of the inviscid
Burgers equation and thus must satisfy Equation
\ref{weaksolutiondefinitionwithmovingboundaries} for any $g(t)$ and
$h(t)$.  We will start by considering the left hand side of Equation
\ref{weaksolutiondefinitionwithmovingboundaries} with selected
moving boundaries and show that it is strictly greater than the
right hand side, proving that $u(x,t)$ cannot be a weak solution by
contradiction. The moving boundaries will be defined by the
positions of the leftmost and rightmost shock.

With the moving boundaries established, begin with the left hand
side of Equation (\ref{weaksolutiondefinitionwithmovingboundaries})
for the given boundaries. By putting a bound on the integrand, we
transform the spatial integral into a temporal integral.
\begin{align}
 \underbrace{\left. \int_{\chi_l(t)}^{\chi_r(t)} u\, dx  \right |_{t_1}^{t_2}}_{LHS} &\geq \min_{x \in (\chi_l \chi_r)} b(x,t_2) \left( \chi_r-\chi_l \right)\\
 &=\min_{x \in (\chi_l \chi_r)} b(x,t_2) \int_{t_1}^{t_2} \frac{\p}{\p t} \chi_r-\frac{\p}{\p t}\chi_l \,\, dt.
\end{align}

Now manipulate the equation to begin resembling the right hand side
of Equation \ref{weaksolutiondefinitionwithmovingboundaries}.
\begin{align*}
LHS & \geq  \int_{t_1}^{t_2} \min_{x \in (\chi_l \chi_r)} b(x,t_2)\left( \frac{\p}{\p t} \chi_r- \frac{\p}{\p t}\chi_l \right)  dt \\
&= \int_{t_1}^{t_2} \left( \min_{x \in (\chi_l \chi_r)} b(x,t_2) -c(\chi_r,t) \right) \left(\frac{\p}{\p t} \chi_r- \frac{\p}{\p t}\chi_l \right ) \\
&         \qquad \,        +\left(a(\chi_l,t)-c(\chi_r,t)\right) \frac{\p}{\p t}\chi_l  \\
&         \qquad \,        + c(\chi_r,t)\frac{\p}{\p t} \chi_r
-a(\chi_l,t)\frac{\p}{\p t}\chi_l  \,\, dt
\end{align*}
Substitute in the speed of $\chi_l$ dictated by the Rankine-Hugoniot
jump conditions.
\begin{align*}
LHS & \geq \int_{t_1}^{t_2} \left( \min_{x \in (\chi_l \chi_r)} b(x,t_2) -c(\chi_r,t) \right) \left(\frac{\p}{\p t} \chi_r- \frac{\p}{\p t}\chi_l \right ) \\
&         \qquad \,        +\left(a(\chi_l,t)-c(\chi_r,t)\right) \left(\frac{b(\chi_l,t)+a(\chi_l,t)}{2}\right)  \\
&         \qquad \,        + c(\chi_r,t)\frac{\p}{\p t} \chi_r -a(\chi_l,t)\frac{\p}{\p t}\chi_l   \,\, dt\\
&=   \int_{t_1}^{t_2} \underbrace{\left( \min_{x \in (\chi_l \chi_r)} b(x,t_2) -c(\chi_r,t) \right) \left(\frac{\p}{\p t} \chi_r- \frac{\p}{\p t}\chi_l \right )}_{\text{L}} \\
&         \qquad \,        +\underbrace{\left(a(\chi_l,t)-c(\chi_r,t)\right) \left(\frac{b(\chi_l,t)-c(\chi_r,t)}{2}\right)}_{\text{M}}  \\
&         \qquad \,        +\left(a(\chi_l,t)-c(\chi_r,t)\right)
\left(\frac{c(\chi_r,t)+a(\chi_l,t)}{2}\right)+
c(\chi_r,t)\frac{\p}{\p t} \chi_r -a(\chi_l,t)\frac{\p}{\p t}\chi_l
\,\, dt
\end{align*}
Consider term L.  The value of $b(x,t)$ for all $x$ and some period
of time after $t_1$ was designated to be higher than  $c(\chi_r,t)$.
Additionally, for at least a short period of time $\frac{\p}{\p t}
\chi_r > \frac{\p}{\p t}\chi_l$, otherwise the interval
$(\chi_l,\chi_r)$ cannot have a nonzero measure.  Thus for values
$t_2$ close to $t_1$, term L is strictly positive.

Now consider term M.  Again the value of $b(x,t)$ for all $x$ and
some period of time after $t_1$ was designated to be higher than
$c(\chi_r,t)$.  It was designated that at time $t_1$, $a(x^{*-},t_1)
\geq c(x^{*+},t_1).$  If $a(x^{*-},t_1) > c(x^{*+},t_1)$, then for
values $t_2$ close to $t_1$ term M is strictly positive.  If
$a(x^{*-},t_1) = c(x^{*+},t_1)$, then by choosing $t_2$ close to
$t_1$, term M can be made arbitrarily small.

As $t_2$ approaches $t_1$, term L is approaching a strictly positive
number, and term M is approaching a non-negative number.  Thus it is
possible to choose a $t_2$ where $\int_{t_1}^{t_2} \text{L}+\text{M}
\, dt>0$.  Using this we see that
\begin{equation}
\label{leftisbiggerthanright} LHS> \int_{t_1}^{t_2}
\frac{a(\chi_l,t)^2-c(\chi_r,t)^2}{2} + c(\chi_r,t)\frac{\p}{\p t}
\chi_r -a(\chi_l,t)\frac{\p}{\p t}\chi_l  \,\, dt.
\end{equation}
The right hand side of Equation \ref{leftisbiggerthanright} is the
right hand side of Equation
\ref{weaksolutiondefinitionwithmovingboundaries} with our chosen
boundaries.  Since with our moving boundaries the left hand side of
Equation \ref{weaksolutiondefinitionwithmovingboundaries} is
strictly greater than the right hand side, $u(x,t)$ cannot be a weak
solution. \end{proof}

This result is now used to show that if there is spontaneous shock
formation or shock splitting that $u(x,t)$ cannot be both a weak
solution and a reparameterization of initial conditions.

\subsubsection{Spontaneous shock formation}\label{spontaneousshocks}

Assume that $u(x,t)$ is the entropy solution to the inviscid Burgers
equation up to time $t_1$ where an increasing shock is formed
spontaneously at point $x^*$.  For such a discontinuity to form at
least one other discontinuity must form in response.  Thus we say
after time $t_1$,
\begin{equation}
u(x,t)=
\begin{cases}
a(x,t) & \text{if $          x <\chi_l(t)$,}\\
b(x,t) & \text{if $\chi_l(t) \le x < \chi_r(t)$}\\
c(x,t) & \text{if $\chi_r(t) \le x $},
\end{cases}
\end{equation}
where $a(x,t), b(x,t),$ and $c(x,t)$ are weak solutions to the
inviscid Burgers equation and $\chi_l(t)$ and $\chi_r(t)$ give the
position of the leftmost and rightmost discontinuities formed during
the shock splitting.  There may be more than two shocks formed as
seen in section (\ref{spontanteousexample}), but we just need to
examine the leftmost and right most.

For $u(x,t)$ to be a weak solution we note that several things must
be true.  The speed of $\chi_l(t)$ and $\chi_r(t)$ are dictated by
the Rankine-Hugoniot jump conditions to be
\begin{equation}
\frac{d}{dt}\chi_l= \frac{a(\chi_l^-)+b(\chi_l^+)}{2}  \qquad
\frac{d}{dt}\chi_r= \frac{b(\chi_r^-)+c(\chi_r^+)}{2}.
\end{equation}
For there to be spontaneous shock forming, there must be some
interval $(t_1, t_2)$, where $\frac{d}{dt}\chi_r > \frac{d}{dt}
\chi_l$.  Thus for some interval $(t_1, t_2)$,  if $a(\chi_l^-) \ge
c(\chi_r^+)$ then $b(\chi_l(t)^+,t)<b(\chi_r(t)^-,t)$.  Assume that
$t \in (t_1, t_2)$ for the remainder of the subsection.

The shocks located at $\chi_l(t)$ and $\chi_r(t)$ must either be
increasing or decreasing shocks.  We will examine each of the
possibilities and show that each leads to $u(x,t)$ not being a
reparameterization of the initial conditions.

\paragraph{Case 1}
Assume that the shock at $\chi_l(t)$ is a decreasing shock and the
shock at $\chi_r(t)$ is a decreasing shock. Then
$a(\chi_l(t)^-,t)>b(\chi_l(t)^+,t)$ and $ b(\chi_r(t)^-,t)>
c(\chi_r(t)^+,t)$.  If $b(\chi_l(t)^+,t) \geq b(\chi_r(t)^-,t)$,
then by the transitive property $a(\chi_l(t)^-,t)>c(\chi_r(t)^+,t)$
and this violates the Rankine-Hugoniot condition,  as was mentioned
above, and $u(x,t)$ is not a weak solution.  If
$b(\chi_l(t)^+,t)<b(\chi_r(t)^-,t)$, then
$a(\chi_l(t)^-,t)>b(\chi_l(t)^+,t)<b(\chi_r(t)^-,t)$, shows $u(x,t)$
is not a reparameterization of initial conditions.  See figure
\ref{case1}.

 \begin{figure}[!ht]
 \begin{center}
 \includegraphics[width=0.9\linewidth]{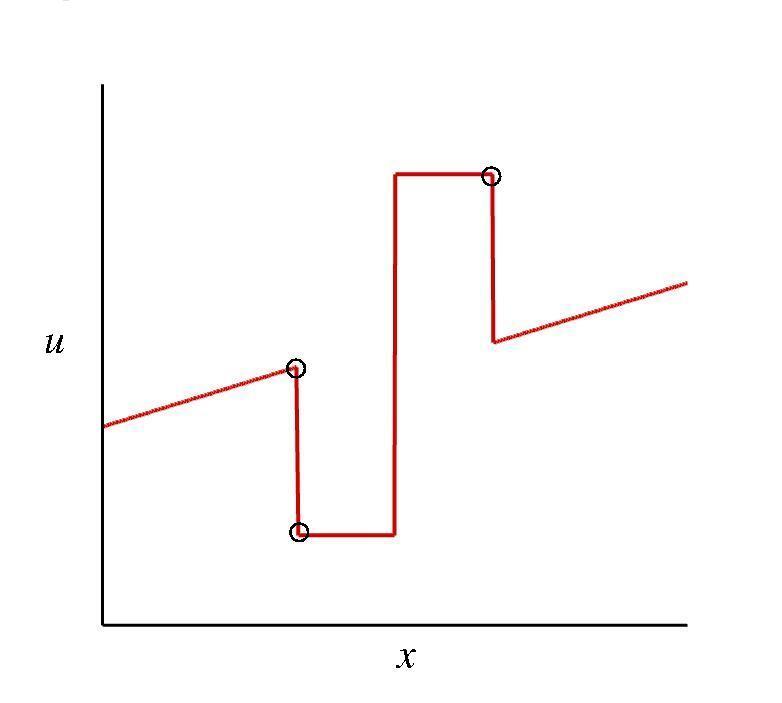}
 \caption{With two decreasing shocks, $u(x,t)$ cannot be a reparameterization.
 The circles represent the points chosen to prove that $u(x,t)$ cannot be
 a reparameterization of initial conditions.}
 \label{case1}
 \end{center}
 \end{figure}

\paragraph{Case 2}
Assume that the shock at $\chi_l(t)$ is a decreasing shock and the
shock at $\chi_r(t)$ is an increasing shock.  Then
$a(\chi_l(t)^-,t)>b(\chi_l(t)^+,t)$ and $ b(\chi_r(t)^-,t)<
c(\chi_r(t)^+,t)$.  Let $b_2=\min\left(
b(\chi_l(t)^+,t),b(\chi_r(t)^-,t) \right)$ , then
$a(\chi_l(t)^-,t)>b2<c(\chi_r(t)^+,t)$, shows $u(x,t)$ is not a
reparameterization of initial conditions. See figure \ref{case2}.

 \begin{figure}[!ht]
 \begin{center}
 \includegraphics[width=0.9\linewidth]{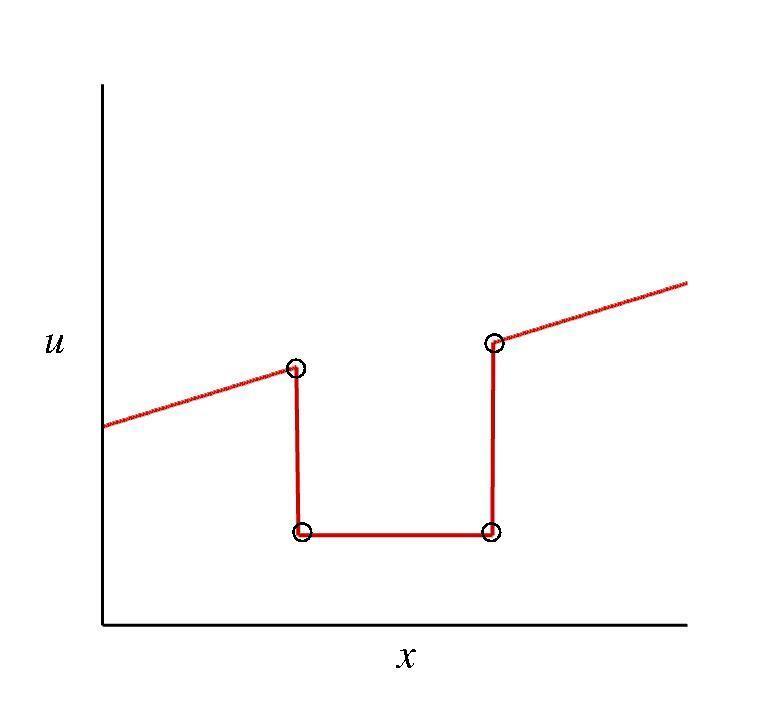}
 \caption{With a decreasing and an increasing shock, $u(x,t)$ cannot be a reparameterization. The circles represent the points chosen to prove that $u(x,t)$ cannot be a reparameterization of initial conditions.}
 \label{case2}
 \end{center}
 \end{figure}

\paragraph{Case 3}
Assume that the shock at $\chi_l(t)$ is an increasing shock and that
the shock at $\chi_r(t)$ is an increasing shock.  Since
$a(\chi_l(t_1)^-,t_1)=c(\chi_r(t_1)^-,t_1)$, for at least a short
period of time after $t_1$, the left value of $b(x,t)$ will be
greater than $a(\chi_l(t),t)$ and $c(\chi_l(t),t)$ and the right
value of $b(x,t)$ will be lower.  By choosing the points
$b(\chi_l(t)^+,t) > b(\chi_r(t)^-,t) < c(\chi_r(t)^+,t)$, $u(x,t)$
is not a reparameterization of the initial conditions.  See figure
\ref{case3}.

 \begin{figure}[!ht]
 \begin{center}
 \includegraphics[width=0.9\linewidth]{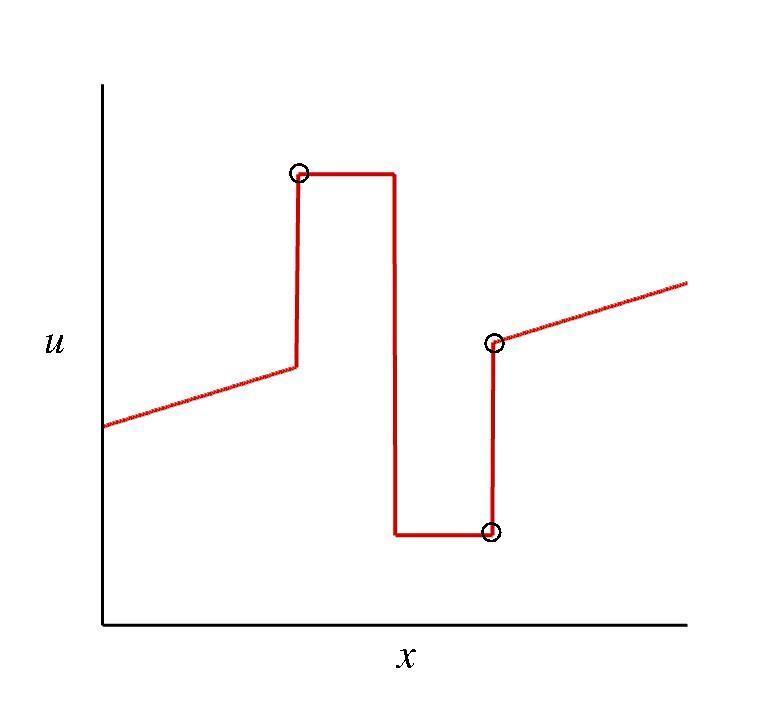}
 \caption{With an increasing shock and an increasing shock, $u(x,t)$ cannot be a reparameterization. The circles represent the points chosen to prove that $u(x,t)$ cannot be a reparameterization of initial
 conditions.}
 \label{case3}
 \end{center}
 \end{figure}

\paragraph{Case 4}
Assume that the shock at $\chi_l(t)$ is an increasing shock and that
the shock at $\chi_r(t)$ is a decreasing shock.  This case will be
divided into two subcases.  The first is that for all $x \in
(\chi_l(t),\chi_r(t))$ that $b(x,t)>a(\chi_l(t)^-,t)$ and
$b(x,t)>c(\chi_r(t)^+,t)$. If this is the case, then $u(x,t)$ is
proven to not be a weak solution by Lemma \ref{thatonelemma}.

 The second case is that there exists an $x_1 \in (\chi_l(t),\chi_r(t))$, such that $b(x_1,t) \leq a(\chi_l(t)^-,t)$ or $b(x_1,t) \leq c(\chi_r(t)^-,t)$.  Since $\chi_l(t)$ is an increasing shock and $\chi_r(t)$ is a decreasing shock, and $a(\chi_l(t_1)^-,t_1)=c(\chi_r(t_1)^-,t_1)$, for at least a short period of time after $t_1$, the left and right value of $b(x,t)$ will be greater than $a(\chi_l(t),t)$ and $c(\chi_l(t),t)$. Thus if $b(x_1,t) \leq a(\chi_l(t)^-,t)$ or $b(x_1,t) \leq c(\chi_r(t)^-,t)$, the points $b(\chi_l(t)^+,t)>b(x_1,t)<b(\chi_r(t)^+,t)$ show that $u(x,t)$ is not a reparameterization of initial conditions.  See figure
 \ref{case4}.

 \begin{figure}[!ht]
 \begin{center}
 \includegraphics[width=0.9\linewidth]{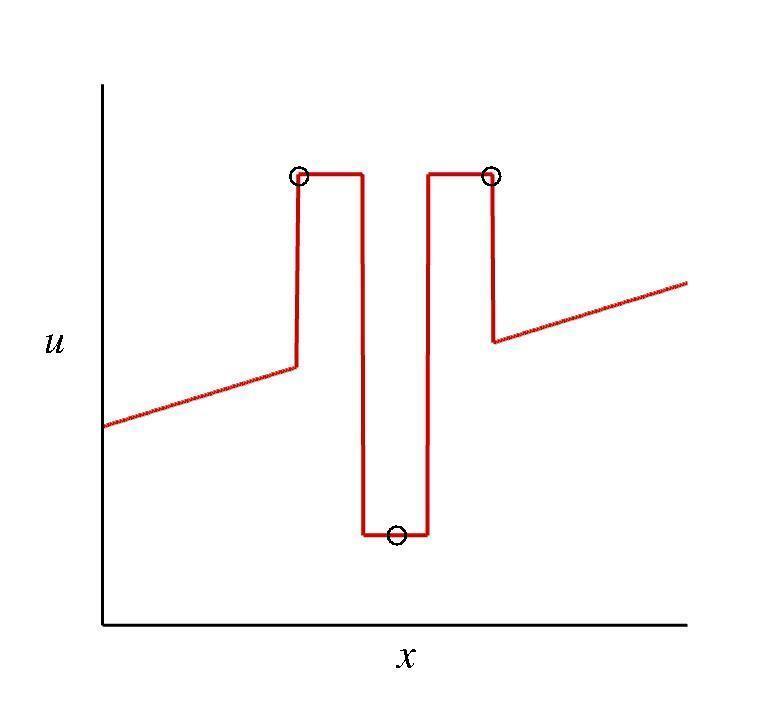}
 \caption{With a decreasing and an increasing shock, $u(x,t)$ cannot be a reparameterization. The circles represent the points chosen to prove that $u(x,t)$ cannot be a reparameterization of initial conditions.}
 \label{case4}
 \end{center}
 \end{figure}

Thus if $u(x,t)$ is a weak solution of the inviscid Burgers equation
and a reparameterization of initial conditions, it cannot engage in
spontaneous shock formation.

\subsubsection{Shock Splitting}\label{shocksplitting}

Assume that $u(x,t)$ is the entropy solution to the inviscid Burgers
equation up to time $t_1$ where an existing decreasing shock splits
into two or more at point $x^*$. Thus we say after time $t_1$,
\begin{equation}
u(x,t)=
\begin{cases}
a(x,t) & \text{if $          x <\chi_l(t)$,}\\
b(x,t) & \text{if $\chi_l(t) \le x < \chi_r(t)$}\\
c(x,t) & \text{if $\chi_r(t) \le x $},
\end{cases}
\end{equation}
where $a(x,t), b(x,t),$ and $c(x,t)$ are weak solutions to the
inviscid Burgers equation and $\chi_l(t)$ and $\chi_r(t)$ give the
position of the leftmost and rightmost discontinuities formed during
the shock splitting.  As there is assumed to be an already existing
decreasing shock at time $t_1$, $a(\chi_l(t_1)^-,t_1) >
c(\chi_r(t_1)^+,t_1)$.  There may be more than two shocks formed as
seen in Equation (\ref{spontanteousexample}), but we just need to
examine the leftmost and right most.

For $u(x,t)$ to be a weak solution we note that several things must
be true.  The speed of $\chi_l(t)$ and $\chi_r(t)$ are dictated by
the Rankine-Hugoniot jump conditions to be
\begin{equation}
\frac{d}{dt}\chi_l= \frac{a(\chi_l^-)+b(\chi_l^+)}{2}  \qquad
\frac{d}{dt}\chi_r= \frac{b(\chi_r^-)+c(\chi_r^+)}{2}.
\end{equation}
For there to be shock splitting, there must be some interval $(t_1,
t_2)$, where $\frac{d}{dt}\chi_r > \frac{d}{dt} \chi_l$.  Thus for
some interval $(t_1, t_2)$,  $a(\chi_l^-) > c(\chi_r^+)$ and thus
$b(\chi_l(t)^+,t)<b(\chi_r(t)^-,t)$.  Assume that $t \in (t_1, t_2)$
for the remainder of the subsection.

The shocks located at $\chi_l(t)$ and $\chi_r(t)$ must either be
increasing or decreasing shocks.  We will examine each of the
possibilities and show that each leads to $u(x,t)$ not being a
reparameterization of the initial conditions.

\paragraph{Case 1}
Assume that the shock at $\chi_l(t)$ is a decreasing shock and the
shock at $\chi_r(t)$ is a decreasing shock. Then
$a(\chi_l(t)^-,t)>b(\chi_l(t)^+,t)$ and $ b(\chi_r(t)^-,t)>
c(\chi_r(t)^+,t)$. We know that $b(\chi_l(t)^+,t)<b(\chi_r(t)^-,t)$,
and thus  $a(\chi_l(t)^-,t)>b(\chi_l(t)^+,t)<b(\chi_r(t)^-,t)$,
shows $u(x,t)$ is not a reparameterization of initial conditions.
See figure \ref{case1split}.

 \begin{figure}[!ht]
 \begin{center}
 \includegraphics[width=0.9\linewidth]{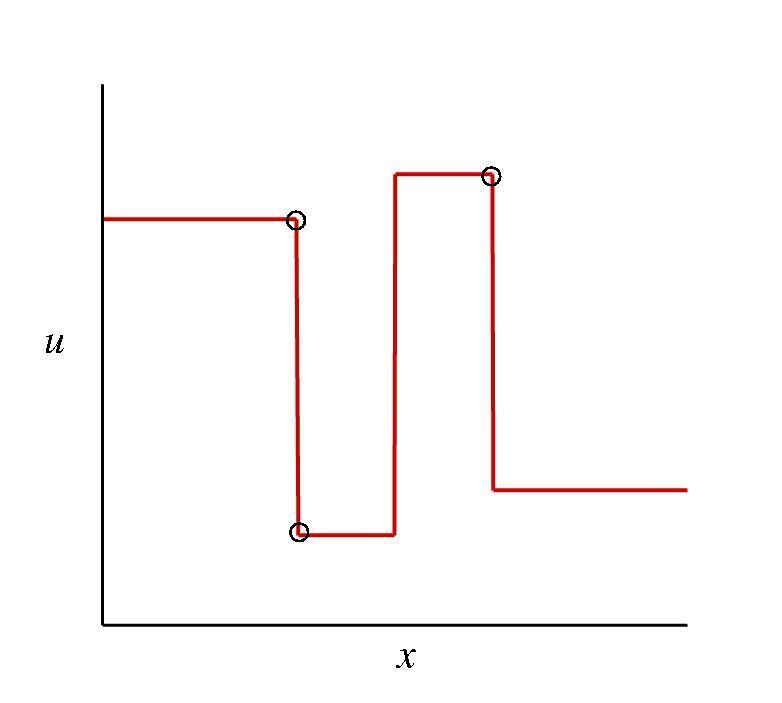}
 \caption{With two decreasing shocks, $u(x,t)$ cannot be a reparameterization. The circles represent the points chosen to prove that $u(x,t)$ cannot be a reparameterization of initial conditions.}
 \label{case1split}
 \end{center}
 \end{figure}

\paragraph{Case 2}
Assume that the shock at $\chi_l(t)$ is a decreasing shock and the
shock at $\chi_r(t)$ is an increasing shock.  Then
$a(\chi_l(t)^-,t)>b(\chi_l(t)^+,t)$ and $c(\chi_r(t)^+,t)>
b(\chi_r(t)^-,t)> b(\chi_l(t)^+,t)$.  Thus the points
$a(\chi_l(t)^-,t)>b(\chi_l(t)^+,t)<c(\chi_r(t)^+,t)$, shows $u(x,t)$
is not a reparameterization of initial conditions. See figure
\ref{case2split}.

 \begin{figure}[!ht]
 \begin{center}
 \includegraphics[width=0.9\linewidth]{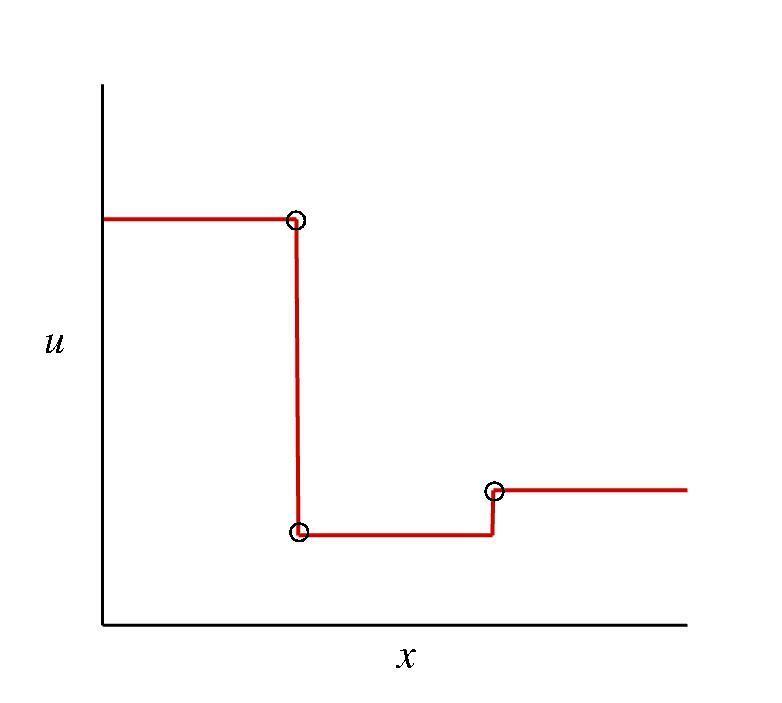}
 \caption{With a decreasing and an increasing shock, $u(x,t)$ cannot be a reparameterization. The circles represent the points chosen to prove that $u(x,t)$ cannot be a reparameterization of initial conditions. }
 \label{case2split}
 \end{center}
 \end{figure}

\paragraph{Case 3}
Assume that the shock at $\chi_l(t)$ is an increasing shock and that
the shock at $\chi_r(t)$ is an increasing shock.  Then
$a(\chi_l(t)^-,t) < b(\chi_l(t)^+,t)$ and $c(\chi_r(t)^+,t)>
b(\chi_r(t)^-,t)$.  From the Rankine-Hugoniot jump conditions, for
the interval $(t_1, t_2)$,  $a(\chi_l^-) > c(\chi_r^+)$ and thus
$b(\chi_l(t)^+,t)<b(\chi_r(t)^-,t)$.  This is a contradiction, so
$u(x,t)$ is not a weak solution.

\paragraph{Case 4}
Assume that the shock at $\chi_l(t)$ is an increasing shock and that
the shock at $\chi_r(t)$ is a decreasing shock.  This case will be
divided into two subcases.  The first is that for all $x \in
(\chi_l(t),\chi_r(t))$ that $b(x,t)>a(\chi_l(t)^-,t)$. If this is
the case, then $u(x,t)$ is proven to not be a weak solution by Lemma
\ref{thatonelemma}.

 The second case is that there exists an $x_1 \in (\chi_l(t),\chi_r(t))$, such that $b(x_1,t) \leq a(\chi_l(t)^-,t)$. Since $\chi_l(t)$ is an increasing shock $b(\chi_l(t)^+,t)>a(\chi_l(t),t)$ and thus $ b(\chi_l(t)^+,t)>b(x_1,t)$.  Since $b(\chi_l(t)^+,t)<b(\chi_r(t)^-,t)$, the points $ b(\chi_l(t)^+,t)>b(x_1,t)<b(\chi_r(t)^-,t)$ show that $u(x,t)$ is not a reparameterization of initial conditions.  See figure \ref{case4split}.

 \begin{figure}[!ht]
 \begin{center}
 \includegraphics[width=0.9\linewidth]{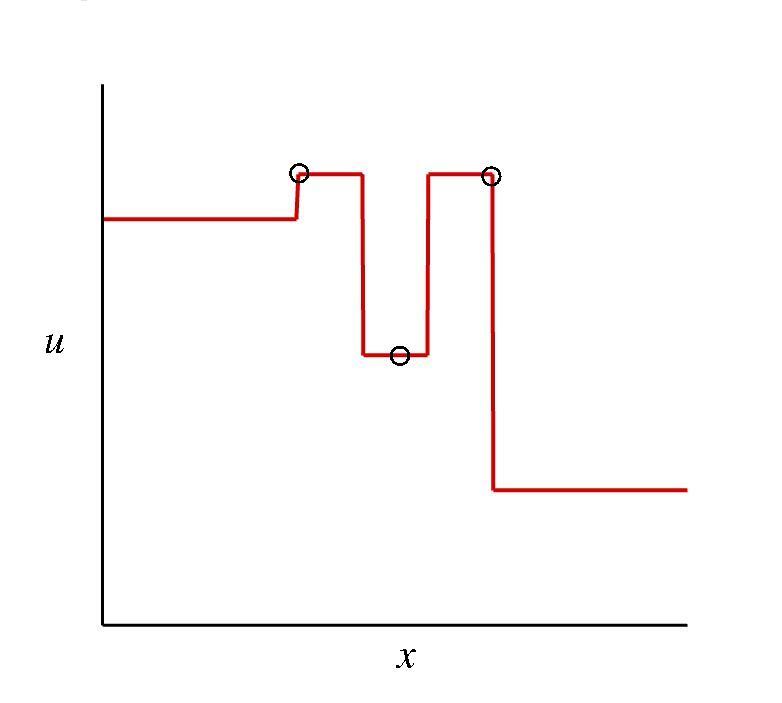}
 \caption{With a decreasing and an increasing shock,$u(x,t)$ cannot be a reparameterization. The circles represent the points chosen to prove that $u(x,t)$ cannot be a reparameterization of initial conditions.}
 \label{case4split}
 \end{center}
 \end{figure}

Thus if $u(x,t)$ is a weak solution of the inviscid Burgers equation
and a reparameterization of initial conditions, it cannot engage in
spontaneous shock formation or shock splitting.

\subsubsection{Entropy violating solutions are not reparameterizations of initial conditions}
From the previous sections the following lemma can be established.

\begin{lemma}\label{entropysolutionlemma}
Let $u(x,t)$ be a weak solution of the inviscid Burgers equation
where the initial conditions satisfy condition B. If $u(x,t)$ is
reparameterization of initial conditions then it is the entropy
solution.
\end{lemma}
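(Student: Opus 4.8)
The plan is to prove Lemma \ref{entropysolutionlemma} by contradiction, assembling the case analyses already carried out in Sections \ref{spontaneousshocks} and \ref{shocksplitting}. Since the initial conditions satisfy condition B, they are continuous, so by Section \ref{decreasingslopesection} the solution $u(x,t)$ begins as the entropy solution and can only fail to be entropic through the spontaneous formation of an increasing shock or through the splitting of an existing decreasing shock into multiple shocks (one of which is increasing). These are the only two mechanisms by which an entropy-violating feature can enter the solution.

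First I would suppose, for contradiction, that $u(x,t)$ is both a weak solution and a reparameterization of the initial conditions, yet fails to be the entropy solution. Failing to be the entropy solution means that at some first time $t_1$ an increasing shock appears. By the dichotomy above, this happens either by spontaneous shock formation at a point of continuity or by shock splitting at an existing discontinuity. In either scenario, Section \ref{decreasingslopesection} guarantees that at least one companion shock appears, so the solution locally takes the three-region form with leftmost shock $\chi_l(t)$ and rightmost shock $\chi_r(t)$ that is analyzed in the two subsections. I would then invoke the exhaustive four-case analysis: in the spontaneous-formation situation, Cases 1 through 4 of Section \ref{spontaneousshocks}, and in the shock-splitting situation, Cases 1 through 4 of Section \ref{shocksplitting}.

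The key point is that each of these eight cases has already been shown to end in one of exactly two ways. Either the configuration violates the Rankine-Hugoniot jump conditions (or Lemma \ref{thatonelemma}) and hence $u(x,t)$ fails to be a weak solution, contradicting our standing assumption; or else three points $x_1<x_2<x_3$ are exhibited with $u(x_1,t)>u(x_2,t)$ and $u(x_2,t)<u(x_3,t)$, an ``upside-down bell,'' which by the criterion stated before Lemma \ref{thatonelemma} is impossible for any reparameterization of a condition-B function. In both outcomes a contradiction is reached. Since the cases are exhaustive over the sign of each of the two outermost shocks, no entropy-violating solution can be simultaneously a weak solution and a reparameterization; equivalently, any $u(x,t)$ that is both must have no increasing shocks and therefore satisfies the Lax entropy condition.

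I expect the main obstacle to be bookkeeping rather than any genuinely new estimate: one must verify that the two enumerations of cases genuinely cover every possibility and that the ``first time'' $t_1$ at which entropy violation could occur is well defined, so that the solution is indeed entropic up to $t_1$ and the three-region local structure is legitimate. A subtle point worth stating carefully is that the reparameterization property is preserved as a standing hypothesis throughout $(t_1,t_2)$ while we derive the three-point configuration, so that exhibiting the upside-down bell really does contradict the hypothesis rather than merely the (already-excluded) steepening mechanism. Once these structural checks are in place, the lemma follows immediately by collecting the contradictions from Sections \ref{spontaneousshocks} and \ref{shocksplitting}.
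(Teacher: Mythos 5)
Your proposal is correct and follows essentially the same route as the paper: the paper's own proof is a one-line appeal to the results of Sections \ref{decreasingslopesection}, \ref{spontaneousshocks}, and \ref{shocksplitting}, and your write-up simply makes explicit the contradiction argument (exhaustive dichotomy of spontaneous formation versus shock splitting, then the eight cases each ending in either a Rankine--Hugoniot/weak-solution violation or an ``upside-down bell'') that those sections are intended to supply. Your added care about the first time $t_1$ of entropy violation and the standing reparameterization hypothesis is a reasonable tightening of what the paper leaves implicit, not a different approach.
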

\begin{proof}
 Clearly this follows from the results in Sections \ref{decreasingslopesection},
 \ref{spontaneousshocks}, and \ref{shocksplitting}.  \end{proof}

\subsection{Convergence to the Entropy Solution}
Based on Lemma \ref{entropysolutionlemma}, the following theorem
regarding the CFB equations converging to the entropy solution can
be established.

\begin{theorem}\label{entropytheorem}
The solutions $u^\alpha$ of the CFB equations converge to the
entropy solution of the inviscid Burgers equation for initial
conditions satisfying condition B.
\end{theorem}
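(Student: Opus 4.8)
The plan is to close the loop between the two facts already in hand. By the existence theorem (Theorem \ref{existencetheorem}) every CFB solution is a reparameterization of its initial data, and by Lemma \ref{entropysolutionlemma} a \emph{weak} solution that happens to be a reparameterization of condition-B data is forced to be the entropy solution. The limit $u$ is already known to be a weak solution, so the entire task reduces to checking that the reparameterization structure survives the passage to the limit.

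First I would invoke the convergence result of Section \ref{weaksolutionsection}: for a filter satisfying condition A, a subsequence $u^{\alpha_k}$ converges in $L^1_{\text{loc}}$ to a weak solution $u$, and this limit inherits the uniform bounds $\|u(\cdot,t)\|_{L^\infty}\le A_1$ and $TV(u(\cdot,t))\le A_2$. The time-equicontinuity estimate \eqnref{equicontinuous} lets me regard this convergence at (almost) every fixed time $t$.

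The main step --- and the place I expect the real difficulty --- is to show that $u(\cdot,t)$ is again a reparameterization of $u_0$. Because $u_0$ satisfies condition B, each approximant $u^{\alpha_k}(\cdot,t)=u_0(\phi^{\alpha_k}(\cdot,t))$ is bell shaped: increasing on $(-\infty,p^{\alpha_k}(t))$ and decreasing on $(p^{\alpha_k}(t),\infty)$, with all values lying in the range of $u_0$. Passing to a further subsequence I would send the peak abscissae $p^{\alpha_k}(t)$ to some $p(t)\in[-\infty,+\infty]$. On every compact interval lying strictly to the left of $p(t)$ the $u^{\alpha_k}(\cdot,t)$ are eventually monotone increasing, and an $L^1_{\text{loc}}$ limit of increasing functions is increasing almost everywhere; symmetrically $u(\cdot,t)$ is decreasing to the right of $p(t)$ (the degenerate limits $p(t)=\pm\infty$ merely yield a globally monotone $u$). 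Thus $u(\cdot,t)$ is bell shaped with range contained in that of $u_0$. To promote ``bell shaped with the right range'' to a genuine reparameterization I would construct the increasing map $\phi(\cdot,t)$ by inverting $u_0$ separately on its increasing and decreasing branches and matching at the peak; since $u=u_0$ at $t=0$ one has $\phi(\cdot,0)=\mathrm{id}$, as required. The delicate points here are controlling the possibly drifting peak $p^{\alpha_k}(t)$ and handling the measure-zero ambiguities of an $L^1$ representative near the branch endpoints, which is why this is the crux of the argument rather than the appeals to the earlier lemmas.

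Finally I would apply Lemma \ref{entropysolutionlemma} verbatim: $u$ is a weak solution that is a reparameterization of condition-B initial data, hence the entropy solution. Since the entropy solution is unique, the limit cannot depend on the chosen subsequence, so a standard subsequence-of-every-subsequence argument upgrades the convergence of the subsequence to convergence of the whole family $\{u^\alpha\}$ to the entropy solution.
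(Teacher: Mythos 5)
Your proposal takes essentially the same route as the paper: convergence of $u^\alpha$ to a weak solution $u$ in $L^1_{\text{loc}}$, preservation of the reparameterization property in the limit, and then an application of Lemma \ref{entropysolutionlemma} to conclude $u$ is the entropy solution. The only difference is one of detail: the paper dismisses the middle step with the single word ``clearly,'' whereas you correctly identify it as the crux and sketch an actual argument (monotonicity surviving the $L^1_{\text{loc}}$ limit, control of the drifting peak, reconstruction of $\phi$), and you also add the subsequence-of-every-subsequence step that the paper leaves implicit.
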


\begin{proof}
 It has already been established that $u^\alpha$ converges to a
 weak solution of the inviscid Burgers equation, $u$, in $L^1_\text{loc}$.
 In the existence uniqueness proof, it was established that $u^\alpha$ is
 a reparameterization of initial conditions.  Clearly if every $u^\alpha$
 is a reparameterization, then its limit $u$ will also be a reparameterization
 of initial conditions.  Since $u$ is a  reparameterization of initial conditions and a weak solution to the
 inviscid Burgers equation, by Lemma \ref{entropysolutionlemma}, $u$
 must be the entropy solution.
 \end{proof}

Now that we have established that the solutions of the CFB equations
converge to the entropy solution of the inviscid Burgers equation
for initial conditions satisfying condition B. The following section
deals with how to regain the entropy solution for discontinuous
initial conditions and why we believe that this result hold true for
more general cases.

\section{Extension into discontinuous initial conditions}\label{moreentropysolutionsection}
Section \ref{entropysolutionsection} proves that the CFB equations
will converge to the entropy solution for a specific set of initial
conditions.  This section explains the intuitive reasoning on why it
is suspected that the CFB equations will converge to the entropy
solution for any continuous initial conditions and why it will not
for discontinuous initial conditions.  It then shows how the
equations can be changed slightly to incorporate discontinuous
initial conditions.  Begin by examining a commonly examined problem
for the inviscid Burgers equation.

\subsection{Example of entropic and non-entropic behavior for the inviscid Burgers equation}\label{behaviorexample}
Consider the initial conditions
\begin{equation}
u_0(x)=
\begin{cases}
0 & \text{if $x < 0 $}\\
1 &\text{if $x \ge 0$}.
\end{cases}
\end{equation}
The method of characteristics does not provide the value of $u$ in
the wedge $0<x<t$ as seen in figure \ref{emptywedge}.  The entropy
solution fills this wedge with the function $u(x,t)=\frac{x}{t}$
with characteristics fanning out from the original discontinuity as
seen in figure \ref{filledwedge}a.  This creates a rarefaction wave
and eliminates the discontinuity after time $t=0$.  A non-entropic
solution will allow the discontinuity to continue to exist.  It will
fill the wedge with new characteristics which continuously originate
from the discontinuity as time progresses as seen in figure
\ref{filledwedge}b.  Thus the non-entropic solution creates new
`information' as time progresses.

This problem embodies the essential behavior of entropic and
non-entropic solutions and provides the basis for our reasoning in
the following subsections.

 \begin{figure}[!ht]
 \begin{center}
 \includegraphics[width=0.9\linewidth]{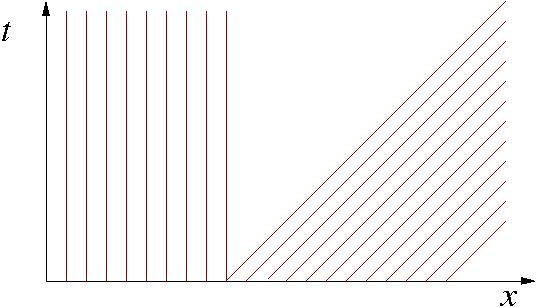}
 \caption{The area $0<x<t$ is not filled by characteristics}
 \label{emptywedge}
 \end{center}
 \end{figure}

  \begin{figure}[!ht]
\begin{center}
\begin{minipage}{0.48\linewidth} \begin{center}
  \includegraphics[width=.9\linewidth]{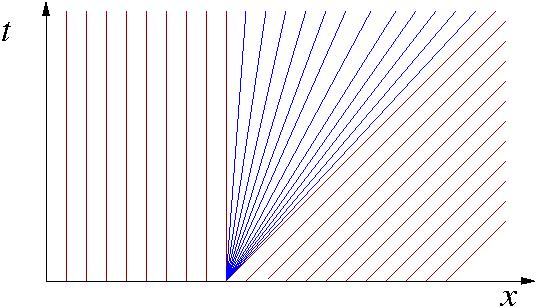}
\end{center} \end{minipage}
\begin{minipage}{0.48\linewidth} \begin{center}
  \includegraphics[width=.9\linewidth]{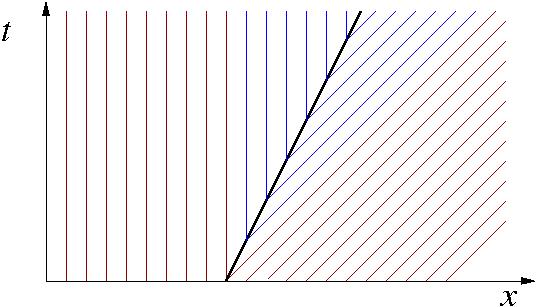}
\end{center} \end{minipage}\\
\begin{minipage}{0.48\linewidth}\begin{center} (a) \end{center} \end{minipage}
\begin{minipage}{0.48\linewidth}\begin{center} (b) \end{center}
\end{minipage}\vspace{-2mm}
\caption{This is how entropic and non-entropic solutions fill the
empty wedges.  a)  The entropic solution fills the wedge with a
rarefaction wave.  b) The non-entropic solution maintains the
discontinuity by creating new characteristics.}
 \label{filledwedge}
\end{center}
\end{figure}

\subsection{Convergence to entropy solution for all continuous initial conditions}
In section \ref{entropysolutionsection}, it was proven that for
initial conditions satisfying condition B, that the solutions to the
CFB equations converge to the entropy solution.  It is the
conjecture of this paper that the solutions to the CFB equations
converge to the entropy solution for all continuous initial
conditions.  As mentioned above a non-entropic solution will create
new characteristics, or new `information' as time progresses.  The
solutions to the CFB equation do not.  The existence and uniqueness
theorem proven in our previous paper, \cite{Norgard:08b} and
restated here as Theorem \ref{existencetheorem}, established that
the solution takes the form $u(x,t)=u_0(\phi(x,t))$, where
$\phi(x,t)$ is an increasing function of $x$ for any time, and
$\phi(x,0)=x$.  This shows that no new information is being created
in the CFB equations. Since the solutions to the CFB equations are
converging to a weak solution to the inviscid Burgers equation and
no new information is being created, it is reasonable to expect the
solutions to converge to the entropy solution.

\subsection{Nonconvergence for discontinuous initial conditions}
Consider initial conditions that have an increasing discontinuity in
them.  The entropy solution to the inviscid Burgers equation creates
a rarefaction wave from the discontinuity which takes on all the
values spanned by the discontinuity.  No new characteristics are
formed, as all originate from the discontinuity at time $t=0$, but
$u(x,t)$ now has values that did not originally exists in $u_0$.  As
shown in the existence and uniqueness theorem for the CFB equations,
the solutions to the CFB equations must have only the values found
in $u_0$.  Thus for initial conditions containing an increasing
discontinuity, the CFB equations will not converge to the entropy
solution.  An example of this can be found in our previous paper
\cite{Norgard:08b}, in section 6, where a traveling wave solution to
the CFB equations can be seen to converge to a non-entropic weak
solution. For this reason, we eliminate discontinuous initial
conditions from the admissible class of initial conditions.

\subsection{Conjecture}
Based on the reasoning in the previous two subsections, we present
the following conjecture.
\begin{conjecture}
\label{bigconjecture} The solutions $u^\alpha$ of the CFB equations
converge to the entropy solution of the inviscid Burgers equation
for continuous initial conditions  as $\alpha \to 0$.
\end{conjecture}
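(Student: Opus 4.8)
The plan is to upgrade Lemma~\ref{entropysolutionlemma} and Theorem~\ref{entropytheorem} from condition B to arbitrary continuous initial data, keeping the same two-stage architecture: first show the limit is a weak solution that is a reparameterization of $u_0$, then show that among weak solutions the reparameterization property singles out the entropy solution. The first stage carries over with little change. For a filter satisfying condition A and for $u_0$ that is continuous, bounded, and of bounded variation, the a priori bounds \eqnref{bounded}--\eqnref{equicontinuous} still hold, so a subsequence of $u^\alpha$ converges in $L^1_\text{loc}$ to a weak solution $u$; and since each $u^\alpha(x,t)=u_0(\phi^\alpha(x,t))$ with $\phi^\alpha$ increasing in $x$, the limit retains the form $u(x,t)=u_0(\phi(x,t))$ with $\phi$ nondecreasing, exactly as argued in the proof of Theorem~\ref{entropytheorem}. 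I would first isolate the hypotheses on $u_0$ (continuity plus enough regularity, e.g. bounded variation, to keep the compactness argument alive) under which these two facts survive, since a generic continuous function need not have finite total variation.

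The second stage is where the real work lies. The reduction in Section~\ref{decreasingslopesection} is not special to condition B: differentiating along characteristics gives $\frac{d}{dt}u_x=-(u_x)^2$, so slopes only decrease and an increasing (entropy-violating) jump can never arise by steepening. Hence, for continuous $u_0$, any failure of the entropy condition must occur through spontaneous shock formation or shock splitting at some first time $t_1$ and point $x^*$, and near $(x^*,t_1)$ the solution is the entropy solution immediately before the event. I would localize there, introduce the leftmost and rightmost emerging shocks $\chi_l,\chi_r$ as in Sections~\ref{spontaneousshocks} and~\ref{shocksplitting}, and attempt to rerun the four-case analysis together with Lemma~\ref{thatonelemma} on the created wavemass.

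The main obstacle is that the diagnostic used throughout those sections---exhibiting three points $x_1<x_2<x_3$ with $u(x_1)>u(x_2)<u(x_3)$ and concluding ``not a reparameterization''---is valid only because a condition B profile is single-humped, so its reparameterization is single-humped and cannot contain such an ``upside-down bell.'' For multi-humped continuous data a reparameterization may legitimately contain upside-down-bell triples, since $\phi$ is merely required to be nondecreasing, so this local test no longer detects a contradiction. The replacement I would develop is a backward-characteristic criterion: because $u(\cdot,t)=u_0\circ\phi$ with $\phi$ nondecreasing, the left-to-right value pattern of $u$ on any interval must be realizable by sampling $u_0$ along a nondecreasing sequence of arguments. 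I would trace the values $a(\chi_l^-)$, $c(\chi_r^+)$ and the relevant values of $b$ back to their initial positions and show that the value ordering forced by the Rankine-Hugoniot conditions together with the wavemass bound of Lemma~\ref{thatonelemma} requires $\phi$ to decrease somewhere, contradicting its monotonicity.

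Carrying this out cleanly is delicate, and two further complications must be addressed. First, general continuous data can produce countably many shocks that merge and interact, so ``first event'' and ``leftmost/rightmost emerging shock'' must be justified with care, perhaps by decomposing $u_0$ into its monotone pieces and treating each locally in the spirit of condition B while controlling the coupling across pieces through shock interactions. Second, the entropy-violating event may occur at an accumulation point of shocks, a situation the finite case analysis of Sections~\ref{spontaneousshocks}--\ref{shocksplitting} does not cover; handling this limiting geometry, rather than the combinatorics of a single isolated event, is where I expect the genuine difficulty to concentrate.
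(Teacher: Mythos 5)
The statement you are trying to prove is Conjecture~\ref{bigconjecture}, which the paper deliberately leaves open: it offers only the heuristic that non-entropic solutions must ``create new information'' while CFB solutions remain reparameterizations of $u_0$, plus numerical evidence, and it proves the actual theorem (Theorem~\ref{entropytheorem}) only under condition B. So there is no paper proof to match; the question is whether your plan closes the gap the authors could not. It does not yet. Your first stage is fine modulo the hypotheses you already flag (you need $TV(u_0)<\infty$ for the compactness in \eqnref{bounded}--\eqnref{equicontinuous}, and you should note that passing the reparameterization property to the $L^1_{\text{loc}}$ limit requires a Helly-type selection on the $\phi^\alpha$, yielding a $\phi$ that is only nondecreasing). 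But the entire content of the conjecture lives in your second stage, and there the proposal stops exactly where the difficulty begins: you correctly observe that the three-point ``upside-down bell'' test, which is the engine of every case in Sections~\ref{spontaneousshocks} and~\ref{shocksplitting}, is valid only for single-humped data, and you propose to replace it with a ``nondecreasing sampling of $u_0$'' criterion --- but you do not prove that the value orderings forced by the Rankine--Hugoniot conditions and Lemma~\ref{thatonelemma} are incompatible with such a sampling. For multi-humped $u_0$ this is genuinely in doubt as stated: a middle state $b$ created by shock splitting could a priori take values that $u_0$ attains in a compatible left-to-right order (e.g.\ near a local minimum of $u_0$), so ruling it out requires tracking \emph{which} initial points the values come from and showing $\phi$ must decrease, which is a new argument, not a rerun of the paper's case analysis.

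The two further complications you name at the end are also not cosmetic. For general continuous BV data the entropy solution can have countably many shocks that form, merge, and accumulate, so ``the first entropy-violating event at an isolated point $x^*$ with well-defined leftmost and rightmost emerging shocks'' is not guaranteed to exist, and the finite four-case combinatorics does not apply at an accumulation point. Until you (i) prove the replacement lemma that a weak solution which is a nondecreasing reparameterization of continuous BV data cannot sustain an increasing jump, and (ii) handle the non-isolated shock geometry, the proposal is a credible research program but not a proof; the statement remains a conjecture, as the paper itself concedes.
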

Assuming this conjecture is true, there is still the matter of
discontinuous initial conditions.  The following subsection creates
a new system that if the conjecture is true will converge to the
entropy solution for all bounded initial conditions.

\subsection{Regaining discontinuous initial conditions}
In regarding discontinuous initial conditions, begin by assuming
that for all $C^1$ initial conditions the solutions to the CFB
equations converge to the entropy solution.  Then if the  $C^1$
initial conditions limit to the discontinuous initial conditions in
$L^1_\text{loc}$, at the same time as $\alpha \to 0$ then the
solutions will converge to the entropy solution for the
discontinuous initial conditions.  To prove this we use a theorem
proven by Oleinik \cite{OleinikOA:57a} presented here.

\begin{theorem}\label{oleinik}
 Let $u^n(x,t)$ be the entropy solution for the inviscid Burgers equation with initial conditions $u^n(x,0)=u^n_0(x)$ and $u^n_0(x) \le m$ for all $n$.  Let
$$\int_{-\infty}^\infty f(x)\left[u^n_0(x)-u_0(x)\right] dx \to 0$$
for $n \to \infty$ for any compactly supported continuous function
$f(x)$.  Then the sequence $u^n(x,t)$ converges for $n \to \infty$
to the entropy solution $u(x,t)$ in $L^1_\text{loc}$ with initial
conditions $u(x,0)=u_0(x)$.
\end{theorem}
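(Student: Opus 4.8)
The plan is to exploit the fact that for the inviscid Burgers equation the entropy solution at any positive time depends on its initial data only through the running integral of that data; this smoothing effect is precisely what lets the purely \emph{weak} convergence of the data imply \emph{strong} convergence of the solutions. I would invoke the Lax--Oleinik representation: setting $U^n_0(y)=\int_0^y u^n_0(s)\,ds$ and $U_0(y)=\int_0^y u_0(s)\,ds$, the entropy solutions are
\[
u^n(x,t)=\frac{x-y_n(x,t)}{t},\qquad w^n(x,t)=\min_{y}\left[U^n_0(y)+\frac{(x-y)^2}{2t}\right]=U^n_0(y_n)+\frac{(x-y_n)^2}{2t},
\]
with $u^n=\partial_x w^n$, and the analogous formulas for $u$, $w$, $y$ built from $U_0$. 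The theorem then reduces to a stability statement for this variational problem.

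First I would convert the weak hypothesis into convergence of primitives. Fixing the normalization $U^n_0(0)=U_0(0)=0$ and approximating the indicator of $[0,x]$ by compactly supported continuous $f$, the hypothesis gives $U^n_0(x)=\int_0^x u^n_0\to\int_0^x u_0=U_0(x)$ for every $x$. The one-sided bound $u^n_0\le m$ is exactly what upgrades this pointwise statement to a \emph{locally uniform} one: it makes each map $y\mapsto U^n_0(y)-my$ non-increasing, and a pointwise-convergent sequence of monotone functions with continuous limit converges locally uniformly (a P\'olya/Dini-type argument).

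Next I would push this through the minimization. For $(x,t)$ with $t>0$ the competitor $y=x$ gives $w^n(x,t)\le U^n_0(x)$, while $u^n\le m$ (the maximum principle, visible in the formula as $y_n\ge x-mt$) confines the minimizers from one side; together with coercivity this keeps the relevant minimizers in a compact $y$-set on which $U^n_0\to U_0$ uniformly, so $w^n\to w$ locally uniformly on $\mathbb{R}\times(0,T]$. Finally, since $w^n(x,t)-x^2/(2t)$ is an infimum of affine functions of $x$ and hence concave, each $w^n$ is semiconcave in $x$; for a locally uniformly convergent sequence of (semi)concave functions the spatial derivatives converge at every point of differentiability of the limit---i.e.\ for a.e.\ $x$, the exceptional points being the at-most-countable shock set---and in $L^1_\text{loc}$. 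As $u^n=\partial_x w^n$ this is exactly $u^n\to u$ in $L^1_\text{loc}$, and the locally uniform convergence of $U^n_0$ also identifies the initial trace of $u$ as $u_0$.

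The main obstacle is conceptual: because the data converge only weakly, Kruzhkov's $L^1$-contraction---which would give the result instantly from $L^1$ convergence of the data---does not apply, so one must genuinely use the equation's regularizing structure. The two delicate technical points are (i) the monotone-convergence upgrade from pointwise to locally uniform convergence of the primitives, where the sign condition $u^n_0\le m$ is indispensable, and (ii) keeping the minimizers $y_n(x,t)$ in a compact range uniformly in $n$ so that the locally uniform convergence of the primitives transfers to the value functions; the remaining passage from convergence of $w^n$ to $L^1_\text{loc}$ convergence of $u^n=\partial_x w^n$ is then the soft, but essential, fact that derivatives of uniformly convergent concave functions converge in $L^1_\text{loc}$.
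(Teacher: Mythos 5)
This theorem is not proved in the paper at all: it is imported verbatim as a result of Oleinik (the text says ``we use a theorem proven by Oleinik \cite{OleinikOA:57a} presented here''), so there is no in-paper argument to compare against. Your reconstruction via the Lax--Oleinik variational formula is, however, essentially the classical route to this stability statement (and close in spirit to Oleinik's own method, which rests on the explicit formula and the one-sided Lipschitz condition), and the chain of steps is sound: weak convergence of the data gives pointwise convergence of the primitives $U^n_0$; the one-sided bound $u^n_0\le m$ makes $U^n_0(y)-my$ monotone so that pointwise convergence upgrades to locally uniform convergence; uniform convergence of the data functional on a compact set of competitors gives locally uniform convergence of the value functions $w^n$; and semiconcavity converts that into a.e.\ and $L^1_{\text{loc}}$ convergence of $u^n=\partial_x w^n$. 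The sandwiching argument you implicitly need in the first step (approximating $\chi_{[0,x]}$ from above and below by continuous compactly supported $f$) also works with only the one-sided bound, which is a nice point.

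The one place where your sketch is thinner than it should be is the confinement of the minimizers $y_n(x,t)$. The bound $u^n\le m$ pins them from below ($y_n\ge x-mt$), but your appeal to ``coercivity'' for the other side is not free: if $u^n_0$ were allowed to be arbitrarily negative, $U^n_0$ could decay fast enough to the right that the minimizer escapes to $+\infty$, and indeed the Hopf--Lax minimum need not even be attained without some lower control on the data (sublinear growth of $U^n_0$, e.g.\ a uniform two-sided $L^\infty$ bound). As stated the theorem gives only $u^n_0\le m$; in Oleinik's original setting, and in the only application made in this paper (mollifications $u_0\ast g^{1/n}$ of bounded $u_0$), the data are uniformly bounded in $L^\infty$, so the gap is harmless there, but you should either assume $|u^n_0|\le m$ or say explicitly where the upper bound on $y_n$ comes from.
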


This theorem is employed in proving the following theorem.

\begin{theorem}\label{entropytheorem2}
 Let $u^{n,\alpha}$ be solutions to the CFB equations with initial conditions
 $u^{n,\alpha}(x,0)=u^n_0(x)$.  Let $u^n_0(x)$ converge to $u_0(x)$
 in $L^1$ as $n \to \infty$.  Let $u^e$ be the entropy solution to
 the inviscid Burgers equation with initial conditions  $u^e(x,0)=u_0(x)$.
 If for initial conditions $u_0(x) \in C^1$, the solutions to the CFB
 equations converge to the entropy solution in $L^1_\text{loc}$ ,
 then $u^{n,\alpha}$ converges to $u^e(x,t)$ in $L^1_\text{loc}$
 as  $n \to \infty$ and $\alpha \to 0$.
\end{theorem}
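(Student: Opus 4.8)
The plan is to interpose, for each $n$, the entropy solution $u^{n,e}$ of the inviscid Burgers equation with the $C^1$ initial data $u^n_0$, and then to control the distance from $u^{n,\alpha}$ to $u^e$ by splitting it through $u^{n,e}$. First I would fix any compact set $\Omega \subset \mathbb{R} \times [0,T]$ and apply the triangle inequality in $L^1(\Omega)$:
\begin{equation}
\| u^{n,\alpha} - u^e \|_{L^1(\Omega)} \le \| u^{n,\alpha} - u^{n,e} \|_{L^1(\Omega)} + \| u^{n,e} - u^e \|_{L^1(\Omega)} .
\end{equation}
The two terms are then handled by the two limiting results already available.

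For the first term, since $u^n_0 \in C^1$ (as required for the CFB solutions to exist by Theorem \ref{existencetheorem}), the hypothesis of the theorem---that for $C^1$ data the CFB solutions converge to the entropy solution in $L^1_\text{loc}$---gives, for each fixed $n$,
\begin{equation}
\lim_{\alpha \to 0} \| u^{n,\alpha} - u^{n,e} \|_{L^1(\Omega)} = 0 .
\end{equation}
For the second term I would invoke Oleinik's Theorem \ref{oleinik}. Its weak-convergence hypothesis follows immediately from the assumed $L^1$ convergence $u^n_0 \to u_0$, since for any compactly supported continuous $f$ one has $\left| \int f (u^n_0 - u_0)\, dx \right| \le \|f\|_\infty \, \|u^n_0 - u_0\|_{L^1} \to 0$. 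Provided the $u^n_0$ share a common upper bound $m$ (which holds, for instance, whenever the $u^n_0$ are obtained as mollifications of a bounded $u_0$, since then $\|u^n_0\|_\infty \le \|u_0\|_\infty$), Theorem \ref{oleinik} yields
\begin{equation}
\lim_{n \to \infty} \| u^{n,e} - u^e \|_{L^1(\Omega)} = 0 .
\end{equation}

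The remaining work is to combine these two one-parameter limits into a single statement as $n \to \infty$ and $\alpha \to 0$, which I would do by a diagonal selection. For each $n$, because the first term tends to $0$ as $\alpha \to 0$, I can choose $\alpha(n)$ with $\alpha(n) \to 0$ and $\|u^{n,\alpha(n)} - u^{n,e}\|_{L^1(\Omega)} < 1/n$. Along the diagonal sequence $(n,\alpha(n))$ both terms of the triangle inequality then tend to $0$, so $u^{n,\alpha(n)} \to u^e$ in $L^1(\Omega)$; since $\Omega$ was an arbitrary compact set, the convergence holds in $L^1_\text{loc}$.

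I expect the main obstacle to be the uniform upper bound demanded by Oleinik's theorem: $L^1$ convergence by itself does not rule out tall, narrow spikes in the $u^n_0$, so the argument genuinely requires the approximating data to be uniformly bounded above. This is automatic for the mollified approximations used in the preceding subsection to regain discontinuous initial conditions, but it must be stated explicitly. The only other point needing care is that the combined limit is understood along a diagonal (equivalently as the iterated limit $\lim_{n\to\infty}\lim_{\alpha\to 0}$), rather than as a limit uniform in $n$, since the rate at which $u^{n,\alpha} \to u^{n,e}$ may deteriorate as $n$ grows.
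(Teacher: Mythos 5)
Your proposal follows essentially the same route as the paper: interpose the entropy solution $u^{n,e}$ for the smooth data $u^n_0$, apply the triangle inequality in $L^1(\Omega)$, use the hypothesis for the $\alpha$-limit and Oleinik's Theorem \ref{oleinik} for the $n$-limit. You are in fact somewhat more careful than the paper, which glosses over both the uniform upper bound required by Oleinik's theorem and the precise meaning of the joint limit in $n$ and $\alpha$ that your diagonal argument makes explicit.
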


\begin{proof}
 Let $\Omega$ be a compact subset of $\mathbb{R} \times [0,T]$.  For $u^{n,\alpha}$ to converge to $u^e(x,t)$ in $L^1_\text{loc}$,
\begin{equation}
 \lim_{n\to \infty \,\, \alpha \to 0} \int \int_\Omega |u^{n,\alpha}-u^e| =0.
\end{equation}
Let $u^n(x,t)$ be the entropy solution to the inviscid Burgers
equation with initial conditions  $u^n(x,0)=u^n_0(x)$.  We have
assumed that
\begin{equation}
 \lim_{\alpha \to 0} \int \int_\Omega |u^{n,\alpha}-u^n| =0.
\end{equation}
From Theorem \ref{oleinik} we know that
\begin{equation}
 \lim_{n\to \infty} \int \int_\Omega |u^{n}-u^e| =0.
\end{equation}
Thus employing the triangle inequality we find
\begin{eqnarray}
 \lim_{n\to \infty \,\, \alpha \to 0} \int \int_\Omega |u^{n,\alpha}-u^e| &\le
\lim_{n\to \infty \,\, \alpha \to 0} \int \int_\Omega
|u^{n,\alpha}-u^n| \nonumber \\
& \qquad + \lim_{n\to \infty \,\, \alpha \to 0} \int \int_\Omega
|u^{n}-u^e|=0
\end{eqnarray}
 \end{proof}

Using Theorem \ref{entropytheorem2} it is easy to see that the
solutions to the initial value problem
\begin{align}
\label{newschemea}
u_t+(u \ast g^\alpha)u_x=0\\
\label{newschemeb} u(x,0)=u_0 \ast g^\alpha.
\end{align}
will converge to the entropy solution of the inviscid Burgers
equation with any initial condition $u_0(x)$ as $\alpha \to 0$. This
scheme can handle discontinuous initial conditions, providing a
greater usefulness.

\section{Numerics}\label{numericssection}
Section \ref{moreentropysolutionsection} proposes that Equations
(\ref{newschemea}) and (\ref{newschemeb}) are a new system for the
convectively filtered Burgers equation that is expected to converge
to the entropy solution of the inviscid Burgers equation as $\alpha
\to 0$ for all bounded initial conditions.  This section runs some
numerical simulation of the proposed system and shows evidence of
convergence to the entropy solution.

\subsection{The entropy solution}
The specific initial condition being examined is the indicator
function for the interval $(1,2)$ or
\begin{equation}
u_0(x)=
\begin{cases}
1 &\text{if $x \in (1,2) $}\\
0 & \text{otherwise}.
\end{cases}
\end{equation}
For the entropy solution to the inviscid Burgers equation, the right
side of the initial pulse will form the standard right traveling
shock and the left side will form a rarefaction wave.  At time
$t=2$, the rarefaction wave meets with the shock front the shock
front begins to decrease in amplitude and speed.  For time $t<2$ the
entropy solution for the given initial conditions is
\begin{equation}
u(x,t)=
\begin{cases}
0 &\text{if $x \leq 1$}\\
\frac{x-1}{t} &\text{if $x \in (1,1+t) $}\\
1 &\text{if $x \in (1+t,2+.5t) $}\\
0 & \text{if $x \ge 2+.5t$ }.
\end{cases}
\end{equation}
For time $t \ge 2$ the entropy solution  is
\begin{equation}
u(x,t)=
\begin{cases}
0 &\text{if $x \leq 1$}\\
\frac{x-1}{t} &\text{if $x \in (1,(2t)^\frac{1}{2}+1) $}\\
0 & \text{if $x \ge (2t)^\frac{1}{2}+1 $}.
\end{cases}
\end{equation}
It is to this solution that the CFB equations' solutions are
compared.

\subsection{Description of Numerical Methods}
Holm and Staley performed successful simulations of the CFB
equations with the Helmholtz filter, using a pseudospectral method
\cite{Holm:03a}. For this paper a very similar method is used.  With
the Helmholtz filter,   Equations (\ref{newschemea}) and
(\ref{newschemeb}) can be written as
\begin{equation}
\label{newschemeagaina} \frac{\p}{\p t}\ubar+\frac{\p}{\p x} \frac
{\ubar^2}{2}=-\frac{3}{2}\alpha^2 \left(I-\alpha^2 \frac{\p}{\p x}^2
\right)^{-1} \frac{\p}{\p x} (\ubar_x)^2
\end{equation}
\begin{equation}
\label{newschemeagainb} \ubar(x,0)=(u_0 \ast g^\alpha)\ast g^\alpha.
\end{equation}
It is these equations that are numerically simulated.

Equation (\ref{newschemeagaina}) is advanced through time with an
explicit, Runge-Kutta-Fehlberg predictor/corrector (RK45).  The
initial timestep is chosen low enough to achieve stability, and is
then varied by the code using the formula
\begin{equation}
h_{i+1}=\gamma h_{i} \left( \frac{\varepsilon h_{i}} {||\bar{u}_i
-\hat{u}_i||_{2} }\right)^\frac{1}{4}.
\end{equation}
Thus the new time step is chosen from the previous time step and the
amount of error between the predicted velocity, $\bar{u}$ and the
corrected velocity $\hat{u}$. The relative error tolerance was
chosen at $\varepsilon=10^{-4}$ and the safety factor $\gamma=0.9$.

Spatial derivatives and the inversion of the Helmholtz operator were
computed in the Fourier domain.  The velocity was converted into the
Fourier domain using a Fast Fourier Transform, multiplied by the
appropriate term and then converted back into the physical domain.
This pseudospectral method of calculating the derivative was chosen
to reduce artificial viscosity.

In Holm and Staley's method spatial derivatives were conducted using
a fourth-order finite difference and an artificial viscosity was
applied to the high wave modes to prevent aliasing errors
\cite{Holm:03a}.  Because the simulations are addressing convergence
to the entropy solution, as little artificial and numerical
viscosity as possible is desired.  For this reason derivatives were
done in the Fourier domain and no artificial viscosity was
introduced.

The simulations were done at the resolution of $2^{16}=65536$ grid
points.  Aliasing errors occurred, but did not introduce significant
amounts of error in the short time the simulations were run.  Figure
\ref{aliasing} shows the spectral energy of the simulation for
$\alpha=0.02$ at various times.  This was the worst case of aliasing
error and it can be seen that the error does not reach more than
approximately $10^{-13}$, at $t=3$ which is approximately 100,000
timesteps.   It should be noted that simulations using Holm and
Staley's artificial viscosity prevented this aliasing error, with
little noticeable effect on the solution.   Simulations with the
artificial viscosity have been conducted and produce the same
general results presented in the following sections.

\begin{figure}[!ht]
\begin{center}
\begin{minipage}{0.48\linewidth} \begin{center}
  \includegraphics[width=.9\linewidth]{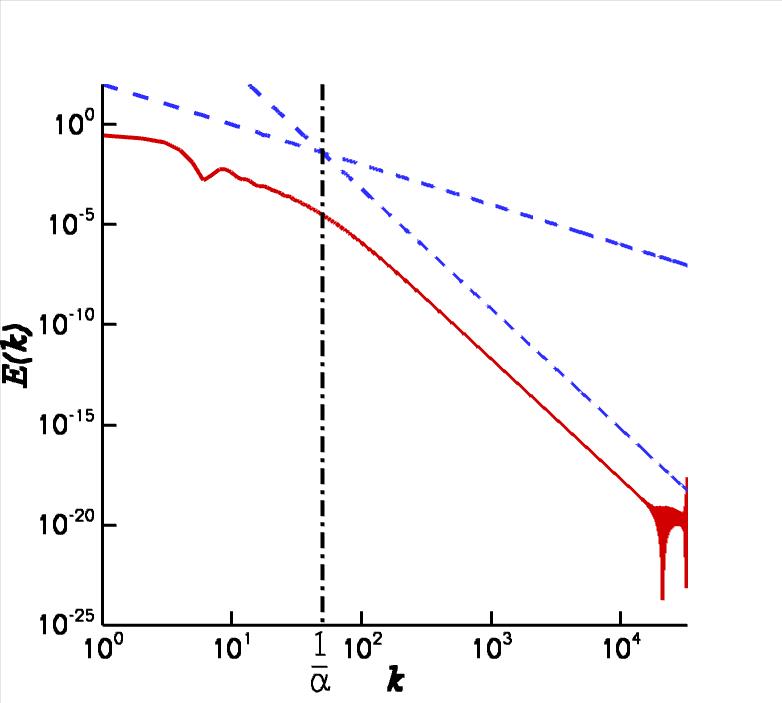}
\end{center} \end{minipage}
\begin{minipage}{0.48\linewidth} \begin{center}
  \includegraphics[width=.9\linewidth]{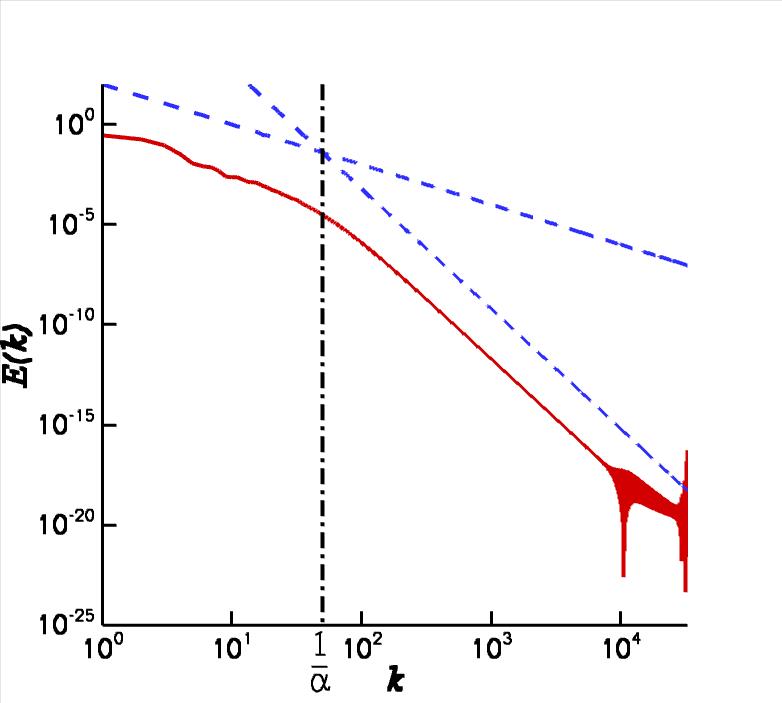}
\end{center} \end{minipage}\\
\begin{minipage}{0.48\linewidth}\begin{center} (a) \end{center} \end{minipage}
\begin{minipage}{0.48\linewidth}\begin{center} (b) \end{center}
\end{minipage}\vspace{2mm}
\begin{minipage}{0.48\linewidth} \begin{center}
  \includegraphics[width=.9\linewidth]{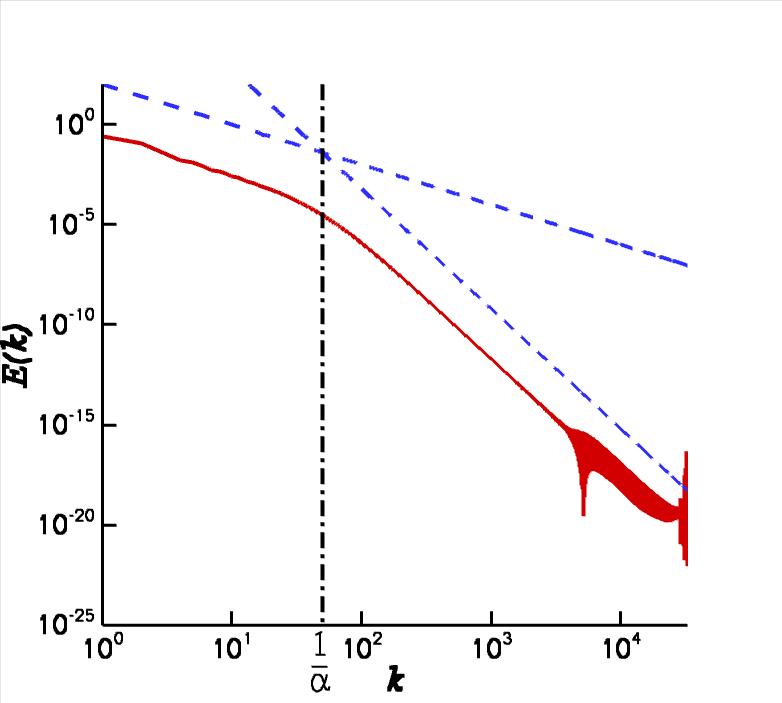}
\end{center} \end{minipage}
\begin{minipage}{0.48\linewidth} \begin{center}
  \includegraphics[width=.9\linewidth]{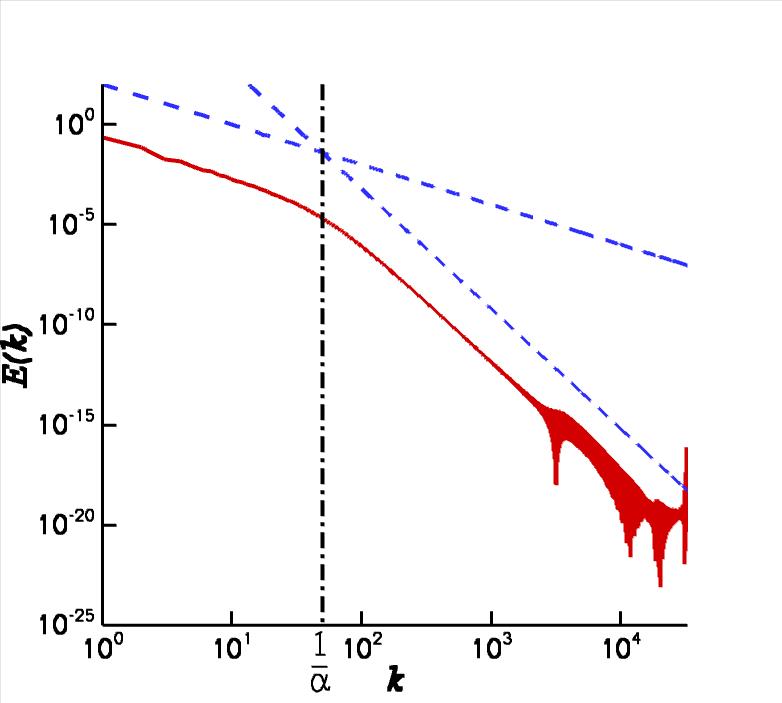}
\end{center} \end{minipage}\\
\begin{minipage}{0.48\linewidth}\begin{center} (c) \end{center} \end{minipage}
\begin{minipage}{0.48\linewidth}\begin{center} (d) \end{center}
\end{minipage}\vspace{-2mm}
\caption{The error caused by aliasing is visible in the lower part
of the energy spectrum. These are snapshots of the spectral energy
at times $t=0.5,1,2,3$. The aliasing error is seen to be propagating
up into lower wave modes, but at time $t=3$, the error is roughly
capped off by $10^{-13}.$  The dashed lines give a reference -$2$
and -$6$ slope.  The spectral energy slope changes at approximately
$\frac{1}{\alpha}$ as is expected.}
 \label{aliasing}
\end{center}
\end{figure}

 \subsection{Results}
 Nine different simulations were conducted with
 $\alpha=0.02, 0.03,...,0.10$.  The CFB equations showed
 behavior mirroring that of the entropy solution.  A traveling
 shock front and a rarefaction wave was seen.  Figure \ref{compare}
 compares the CFB simulations for $\alpha=0.02$ to the entropy
 solution at times $t=0,1,2,3$. In figure \ref{compare}a the
 difference in initial conditions can be seen with the entropy
 solution beginning with discontinuities and the CFB simulation
 having smoothed initial conditions.

 \begin{figure}[!ht]
\begin{center}
\begin{minipage}{0.48\linewidth} \begin{center}
  \includegraphics[width=.9\linewidth]{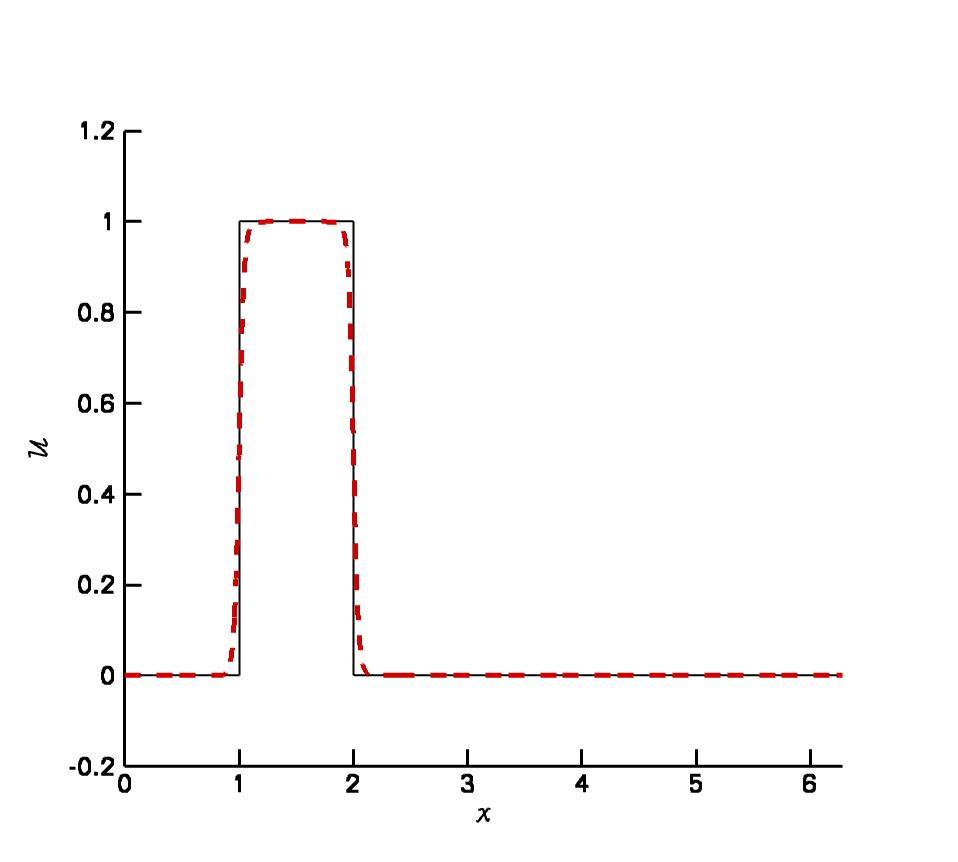}
\end{center} \end{minipage}
\begin{minipage}{0.48\linewidth} \begin{center}
  \includegraphics[width=.9\linewidth]{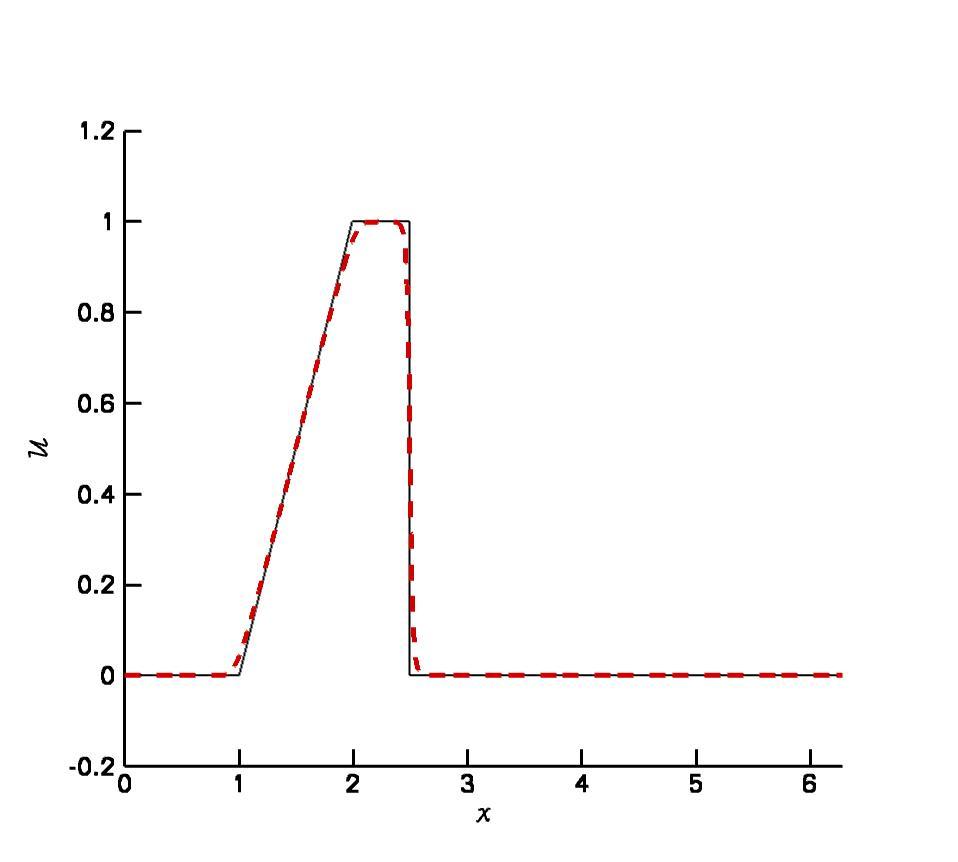}
\end{center} \end{minipage}\\
\begin{minipage}{0.48\linewidth}\begin{center} (a) \end{center} \end{minipage}
\begin{minipage}{0.48\linewidth}\begin{center} (b) \end{center}
\end{minipage}\vspace{2mm}
\begin{minipage}{0.48\linewidth} \begin{center}
  \includegraphics[width=.9\linewidth]{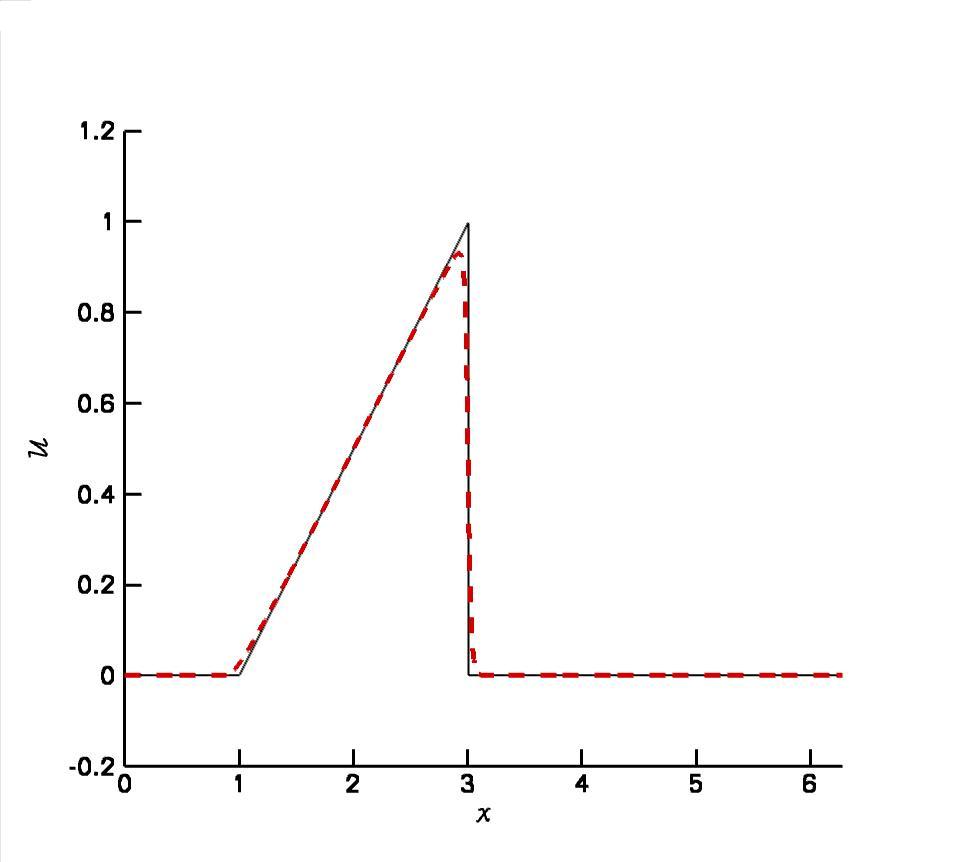}
\end{center} \end{minipage}
\begin{minipage}{0.48\linewidth} \begin{center}
  \includegraphics[width=.9\linewidth]{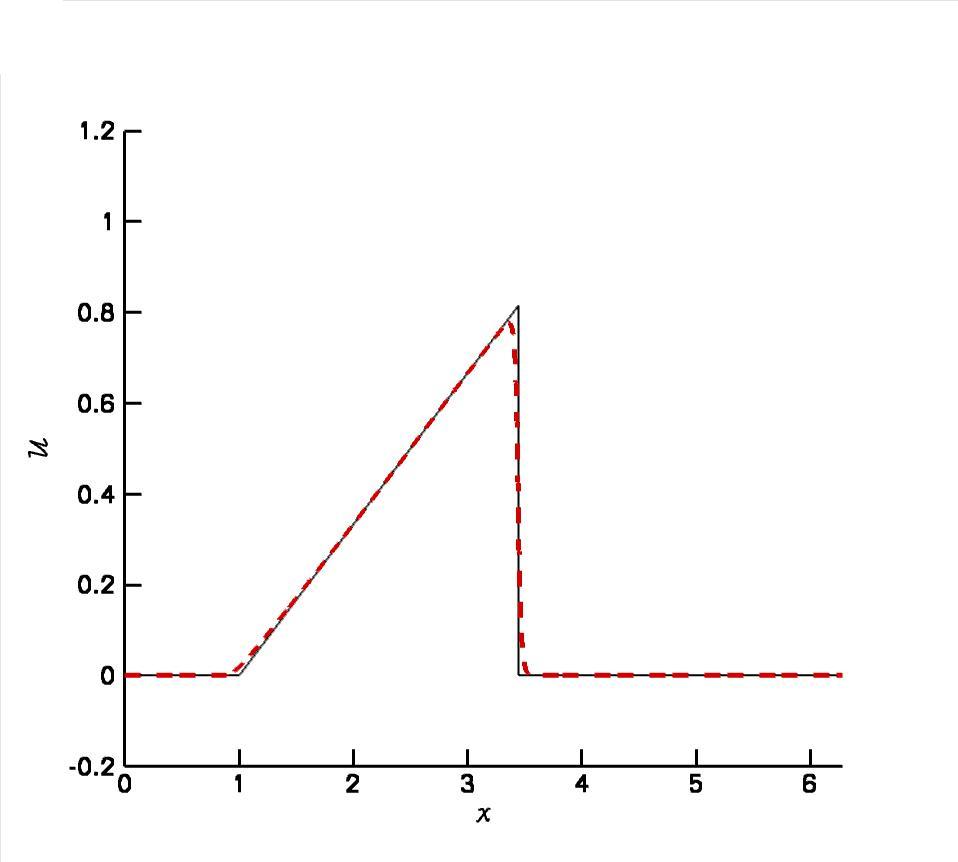}
\end{center} \end{minipage}\\
\begin{minipage}{0.48\linewidth}\begin{center} (c) \end{center} \end{minipage}
\begin{minipage}{0.48\linewidth}\begin{center} (d) \end{center}
\end{minipage}\vspace{-2mm}
\caption{This figure compares the entropy solution with the solution
to the CFB equations for $\alpha=0.02.$  It is easy to see that the
CFB equations' solution is capturing both the rarefaction wave and
the shock front behavior.}
 \label{compare}
\end{center}
\end{figure}

To evaluate the convergence of the CFB equations' solutions to the
entropy solution the $L_1$ norm of the error between the CFB
equations' solution and the entropy solution was taken.  Figure
\ref{l1error} plots $\alpha$ versus the error at times $t=0,1,2,3$.
At each time the error appears to be approaching zero linearly. Thus
numerical evidence suggests that the Equations (\ref{newschemea})
and (\ref{newschemeb}) will converge to the entropy solution of the
inviscid Burgers equation for initial conditions with
discontinuities.

 \begin{figure}[!ht]
 \begin{center}
 \includegraphics[width=0.9\linewidth]{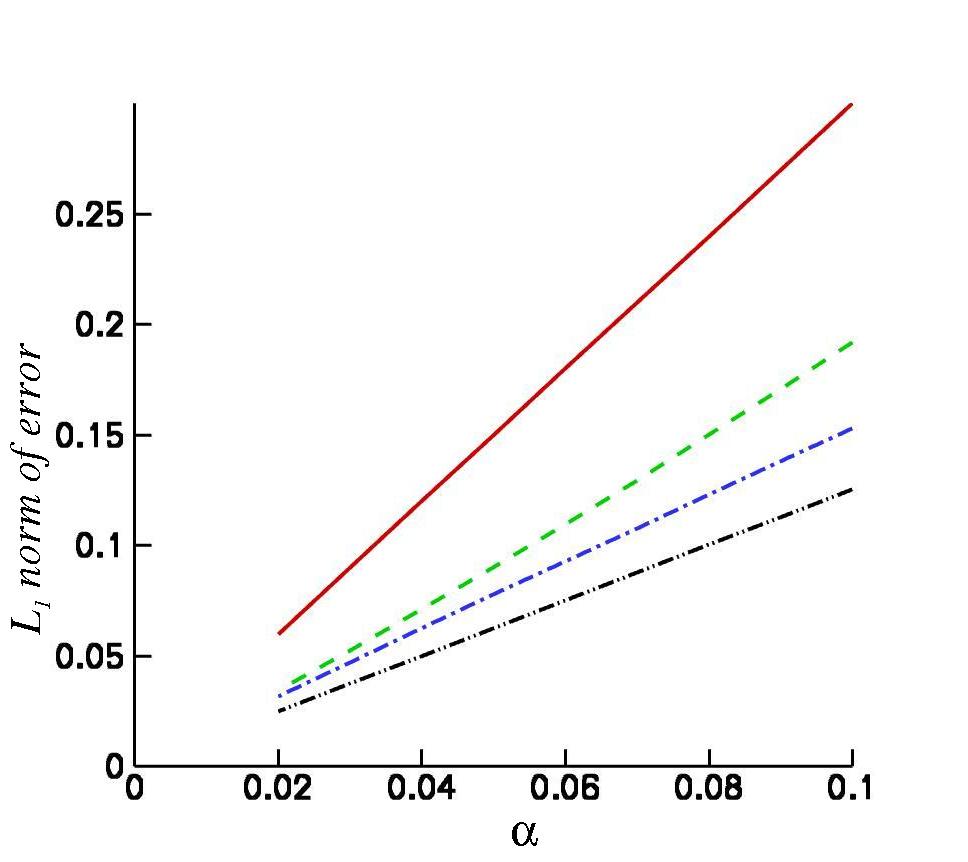}
 \caption{The $L_1$ norm of the error between the CFB equations'
 solution and the entropy solution.  The error is displayed for four
 different values of $t$.  $t=0$ \solid, $t=1$ \dashed, $t=2$ \dashdot,
 and $t=3$ \dashdotdot.  The error approaches zero roughly linearly
 as $\alpha \to 0$.}
 \label{l1error}
 \end{center}
 \end{figure}

\section{Conclusion}
Conservation laws can often have multiple weak solutions of which
there is one physically relevant solution, known as the entropy
solution.  It is important that any regularization of these
conservation laws  reflect the physical phenomenon they are meant to
address.  Thus it is important that the solutions to such
regularizations converge to the entropy solution.  The convectively
filtered Burgers equation has been shown to regularize the inviscid
Burgers equation.  This paper now shows that for a certain class of
initial conditions this regularization will converge to the entropy
solution.  It has also provided a method for extending this
convergence to a large class of initial conditions including
discontinuities.  These results are a crucial step in extending the
use of the convectively filtered method into popular use and perhaps
an extension into the Euler equations.

\section{Acknowledgments}
The research in this paper was partially supported by the AFOSR
contract FA9550-05-1-0334.

\bibliography{../RefA1}
\bibliographystyle{unsrt}

\end{document}